\newtheorem{remark}{\bf Remark}
\newtheorem{theorem}{\bf Theorem}
\newtheorem{lemma}{\bf Lemma}
\newtheorem{property}{\bf Property}
\newtheorem{definition}{\bf Definition}
\newtheorem{corollary}{\bf Corollary}
\newtheorem{proposition}{\bf Proposition}
\def\alphadot{{\dot \alpha}}
\def\deltadot{{\dot \delta}}
\def\R{\mathbb{R}}
\def\N{\mathbb{N}}
\newcommand{\rr}{\mathbf{r}}
\newcommand{\x}{\mathbf{x}}
\newcommand{\bxi}{\bm{\xi}}
\newcommand{\bzero}{\underline 0}
\newcommand{\C}{\mathbb{C}}
\newcommand{\angmom}{{\bf c}}
\newcommand{\energy}{{\cal E}}
\newcommand{\lenz}{{\bf L}}
\newcommand{\erre}{{\bf r}}
\newcommand{\erredot}{\dot{\bf r}}
\newcommand{\erho}{\hat{\bf e}^\rho}
\newcommand{\ealpha}{\hat{\bf e}^\alpha}
\newcommand{\edelta}{\hat{\bf e}^\delta}
\def\eort{{\bf e}^\perp}
\def\DD{{\bf D}}
\def\EE{{\bf E}}
\def\FF{{\bf F}}
\def\GG{{\bf G}}
\def\JJ{{\bf J}}
\def\cc{{\bf c}}
\def\rhodot{\dot{\rho}}
\def\rr{{\bf r}}
\def\qu{{\bf q}}  
\def\qudot{\dot{\bf q}} 
\def\Att{{\cal A}}
\def\Avec{{\bf A}}
\def\Rvec{{\bf R}}
\def\Rcal{\mathcal{R}}
\def\bzero{{\bf 0}}   
\def\bDelta{\bm{\Delta}}
\def\alphadot{\dot{\alpha}}
\def\deltadot{\dot{\delta}}
\def\erho{{\bf e}^\rho}
\def\ealpha{{\bf e}^\alpha}
\def\edelta{{\bf e}^\delta}
\def\bq{{\bf q}}
\def\br{{\bf r}}
\def\rhodot{\dot{\rho}}
\def\bv{{\bf v}}
\def\bzero{{\bf 0}}
\def\bqdot{\dot{\bf q}}
\def\newu{\mathfrak{v}}
\def\oldu{\mathfrak{u}}
\def\bPhi{\bm{\Phi}}
\def\bPsi{\bm{\Psi}}
\def\etabf{\bm{\eta}}
\def\bu{{\bf u}}
\begin{document}
%
\title{Orbit determination with the Keplerian integrals}
%
%
\author{{\bf Giovanni Federico Gronchi}\\
  \\
%
%
%
%
  Dipartimento di Matematica, Universit\`a di Pisa\\
  Largo B. Pontecorvo 5, 56127, Pisa, Italy
}

\maketitle              

\begin{abstract}
  We review two initial orbit determination methods for too short
  arcs (TSAs) of optical observations of a solar system body. These
  methods employ the conservation laws of Kepler's problem, and allow
  to attempt the linkage of TSAs referring to quite far epochs,
  differing by even more than one orbital period of the observed object. The
  first method ({\tt Link2}) concerns the linkage of 2 TSAs, and leads
  to a univariate polynomial equation of degree 9.  An optimal
  property of this polynomial is proved using Gr\"obner bases theory.
  The second method ({\tt Link3}) is thought for the linkage of 3
  TSAs, and leads to a univariate polynomial equation of degree 8. A
  numerical test is shown for both algorithms.
  
\end{abstract}

\rightline{\em dedicated to Prof. Andrea Milani}

\section{Introduction}

Modern telescopes collect a very large number of optical observations
of solar system bodies, that can be usually grouped in {\bf very short
  arcs} (VSAs), see \cite{mg10}.  A VSA is a set
\[
  \{(\alpha_i,\delta_i),\quad  i=1\ldots m\}, \qquad m\geq 2
\]
of pairs of values of {\em right ascension} and {\em declination} of
the same celestial body, referring to epochs $t_i$, and covering a
very short path in the sky.  Usually the data contained in a VSA do
not allow to compute a least squares orbit: in this case we speak of a
too short arc (TSA).  Given a TSA, we can compute an {\bf
  attributable}
\[
\Att = (\alpha,\delta,\alphadot,\deltadot)
\]
at the mean epoch $\bar t = \frac{1}{m}\sum_i t_i$ of the observations
by a linear or quadratic fit, see \cite{mg10}.
Given an
attributable, the radial distance $\rho$ and the radial velocity
$\dot{\rho}$ of the observed body remain completely unknown.  However,
given two attributables referring to the same celestial body, we can
try to put them together with the aim of computing an orbit that fits
all the data. This operation is called {\bf linkage} in the
orbit determination literature, and it is often challenging:
an orbit produced by
linking together two TSAs usually needs a confirmation with additional
data to be considered reliable.  Moreover, we cannot know {\em a
  priori} that two TSAs refer to the same observed body, and to
perform an efficient selection of pairs of TSAs to be passed to a
linkage algorithm is a critical issue. 

In this paper we review two recent initial orbit determination methods,
introduced in \cite{gbm15} and \cite{gbm17}, for the linkage of two or
three TSAs. These are called {\tt Link2} and {\tt Link3}, respectively. Some interesting algebraic aspects of these algorithms are
also discussed, and a numerical test is shown for both.
  
\section{Linkage with the Keplerian integrals}
\label{s:kepintlink}

The first integrals of Kepler's motion can be used to write polynomial
equations for the linkage of 2 TSAs.  The conservation laws of angular
momentum and energy were proposed for the linkage problem already in
 \cite{th77},
\cite{taff84}, \cite{trs84}: here the authors observed that the equations could be put
in polynomial form but did not use this form, see.
A polynomial formulation of the linkage problem was considered later
in a series of papers \cite{gdm10}, \cite{gfd11}, \cite{gbm15}.  In
\cite{gdm10} the angular momentum and energy conservation laws are
used, as in \cite{th77}: a polynomial is obtained by squaring twice
the equation of the energy conservation. After elimination of
variables we get a univariate equation of degree 48 in the radial
distance $\rho_2$.  In \cite{gfd11} the degree is reduced to 20 by
using the Laplace-Lenz vector projected along a suitable direction in
place of the energy. In \cite{gbm15} all the algebraic conservation
laws are combined so that the degree is reduced to 9: this is the
algorithm that we recall here.


\begin{remark}
Classical preliminary orbit determination methods, e.g. the ones by
Gauss, Laplace, Mossotti \cite{gauss1809}, \cite{laplace},
\cite{cellpinz}, \cite{gbrjm21} use the equations of motion, and
Taylor series expansions around a central time of the observational arc, thus
the observations must necessarily be close enough in time. We observe
that using conservation laws this constraint on the time is not
required.
\end{remark}
  
\subsection{Kepler's problem and its first integrals}

The equation of motion of Kepler's problem is
\begin{equation}
\ddot{\rr} = -\mu\displaystyle\frac{\rr}{|\rr|^3},
\label{kepeq}
\end{equation}
where $\rr\in\R^3$ is the unknown position vector and $\mu$ is a
positive constant.
The dynamics defined by \eqref{kepeq} has the following conserved quantities:
\begin{eqnarray*}
  &&{ \cc = \rr \times \dot{\rr}},\hskip 1.3cm
  \mbox{\it angular momentum}\\
  &&{ {\cal E} = \frac{1}{2}|\erredot|^2 -
      \frac{\mu}{|\erre|}},\hskip 0.6cm \mbox{\it energy}\\
  &&{ \lenz = \frac{1}{\mu}\erredot\times\angmom -
              \frac{\erre}{|\erre|}}, \hskip
  0.3cm \mbox{\it Laplace-Lenz vector}.
\end{eqnarray*}
%
%
We call these quantities the {\bf Keplerian integrals}.
Since $\angmom$ and $\lenz$ have 3 components we get 7 scalar
conserved quantities: among them only 5 are independent, in fact 
\[
    \angmom\cdot\lenz = 0,\qquad
     2|\angmom|^2\energy + {\mu}^2(1-|\lenz|^2) = 0.
\]

Given an attributable $\Att$ at the epoch $\bar t$, we write below the
Keplerian integrals as functions of the unknown radial distance and
velocity $\rho,\rhodot$.
We start by writing
\[
\begin{split}
  &\erre = \bq + \rho\,\erho,\cr
  &\erredot = \bqdot + \rhodot\,\erho + \rho(\alphadot\cos\delta\ealpha + \deltadot\edelta),
\end{split}
\]
where $\bq$, $\bqdot$ are the position and velocity of the observer at
time $\bar t$,
\[
\erho = (\cos\delta\cos\alpha, \cos\delta\sin\alpha, \sin\delta)
\]
gives the line of sight, and
\[
\ealpha =
(\cos\delta)^{-1}\frac{\partial\erho}{\partial\alpha}, \qquad\edelta =
\frac{\partial\erho}{\partial\delta}. 
\]
  
The angular momentum vector can be expressed as
\[
\cc(\rho,\rhodot) = \rr \times \dot{\rr} = \DD \rhodot + \EE
\rho^2 + \FF \rho + \GG.
\]
The vectors $\DD,\EE,\FF,\GG$ depend only on the attributable $\Att$
and on $\bq, \bqdot$:
\begin{equation}
\begin{array}{l}
\DD = \bq\times\erho,\cr
\EE = \alphadot\cos\delta\erho\times\ealpha +
\deltadot\erho\times\edelta
= \alphadot\cos\delta\edelta - \deltadot\ealpha ,\cr
\FF = \alphadot\cos\delta\bq\times\ealpha + \deltadot\bq\times\edelta + \erho
\times\bqdot,\cr
\GG = \bq\times\bqdot. \cr
\end{array}
\label{DEFG}
\end{equation}

The { energy} can be written as
\[
{\cal E} = \frac{1}{2}|\erredot|^2 -  \frac{\mu}{|\erre|},
\]
where
\[
|\erre| = \sqrt{\rho^2 + 2(\bq\cdot\erho)\rho + |\bq|^2 },
\]
and
\[
\vert\erredot\vert^2 = \rhodot^2 
+ (\alphadot^2\cos^2\delta + \deltadot^2)\rho^2 + 2\bqdot\cdot\erho\rhodot
+ 2\bqdot\cdot(\alphadot\cos\delta\ealpha + 
\deltadot\edelta)\rho + \vert\bqdot\vert^2.
\]
Finally, the Laplace-Lenz vector $\lenz$ is given by
\[
\mu\lenz(\rho,\rhodot) = 
\Bigl(\vert\erredot\vert^2 - \frac{\mu}{|\erre|} \Bigr)\erre -
(\erredot\cdot\erre)\erredot,
\]
where
\[
\erredot\cdot\erre = \rho\rhodot + \bq\cdot\erho\rhodot + 
(\bqdot\cdot\erho + \bq\cdot\ealpha\alphadot\cos\delta +
\bq\cdot\edelta\deltadot)\rho +  \bqdot\cdot\bq .
\]

\begin{remark}
  The expressions of $\energy$ and $\lenz$ are algebraic but not
  polynomial, due to the presence of the term $\mu/|\erre|$. If we
  consider the auxiliary variable $z$ defined by relation
  \begin{equation}
    { z|\erre| = \mu,}
    \label{urhodep}
  \end{equation}
  then the Keplerian integrals can be viewed as polynomials in the
  variables $\rho,\rhodot,z$ by writing $z$ in place of $\mu/|\erre|$.
  In this way, we obtain
  \[
  \tilde\energy = \frac{1}{2}|\erredot|^2 - z,
  \hskip 1cm
  \mu\tilde\lenz = (|\erredot|^2 - z)\erre - (\erredot\cdot\erre)\erredot.
  \]
  The relation between $\rho$ and $z$ can be taken into account through
  the polynomial equation
  \begin{equation}
    |\erre|^2z^2 = \mu^2.
    \label{zintro}
  \end{equation}
  \label{rem:z}
  Moreover, the following relations hold:
  \[
  \angmom\cdot\tilde{\lenz} = 0,
  \hskip 1cm
  2|\angmom|^2\tilde{\energy} + \mu^2(1-|\tilde{\lenz}|^2) = 0.
  \]
\end{remark}

\subsection{Polynomial equations for the linkage}
  
Given two attributables $\Att_1, \Att_2$ at the epochs $\bar{t}_1, \bar{t}_2$,
referring to the same solar system body, we consider the system
\begin{equation}
{ \angmom_1 =\angmom_2, \quad
\lenz_1 = \lenz_2, \quad
\energy_1 = \energy_2,
}
\label{fulls}
\end{equation}
of 7 algebraic (but not polynomial) equations in the 4 unknowns
$\rho_1$, $\rho_2$, $\rhodot_1$, $\rhodot_2$.
%
%
System \eqref{fulls} depends on the vector of known parameters
\[
(\Att_1, \Att_2, \bq_1, \bq_2, \bqdot_1, \bqdot_2),
\]
and is overdetermined.

If we assume that the two-body dynamics is perfectly respected, and no
error occurs in the coefficients, then the set of solutions of
\eqref{fulls} in the complex field $\C$ (but also in $\R$) is not
empty.  More realistically, since these assumptions cannot hold
exactly, system \eqref{fulls} is generically\footnote{i.e. such
  property can not be violated in a non-empty open subset of the data
  set $\bq_j,\bqdot_j,\Att_j$, $j=1,2$} inconsistent.

Combining the equations in \eqref{fulls} we can obtain an
overdetermined polynomial system which is consistent and can be reduced by
elimination to a univariate polynomial $\mathfrak{u}$ of degree 9 in
one of the radial distance, e.g. $\rho_2$, as will be shown below.




The conservation of the angular momentum $\angmom_1=\angmom_2$
can be written as
\begin{equation}
\DD_1\rhodot_1-\DD_2\rhodot_2 = \JJ(\rho_1,\rho_2),
\label{ameq}
\end{equation}
where
\begin{equation}
\JJ(\rho_1,\rho_2) = \EE_2\rho_2^2 - \EE_1\rho_1^2 + \FF_2\rho_2 -
\FF_1\rho_1 + \GG_2 - \GG_1,
\label{JJ}
\end{equation}
and $\DD_j,\EE_j,\FF_j,\GG_j$ are given by relations \eqref{DEFG} at
times $\bar{t}_j$.
Projecting equations \eqref{ameq} onto the vectors
\[
\DD_1\times\DD_2,\quad 
\DD_1\times(\DD_1\times\DD_2),\quad 
\DD_2\times(\DD_1\times\DD_2),
\]
where
\[
\DD_j = \bq_j\times\erho_j,
\]
we get
\begin{eqnarray}
&&\JJ(\rho_1,\rho_2)\cdot(\DD_1\times\DD_2) = 0,\nonumber\\%
&&|\DD_1\times\DD_2|^2\rhodot_1 - \JJ(\rho_1,\rho_2)\cdot
  \DD_1\times(\DD_1\times\DD_2) = 0,\label{rho1dot}\\
&&|\DD_1\times\DD_2|^2\rhodot_2 - 
\JJ(\rho_1,\rho_2)\cdot \DD_2\times(\DD_1\times\DD_2) = 0.\label{rho2dot}
\end{eqnarray}
We set
\[
q(\rho_1,\rho_2) = \JJ(\rho_1,\rho_2)\cdot(\DD_1\times\DD_2).
\]
This is a quadratic polynomial, that can be written as
\begin{equation}
q(\rho_1,\rho_2) = q_{2,0}\rho_1^2 + q_{1,0}\rho_1 +
q_{0,2}\rho_2^2 + q_{0,1}\rho_2 + q_{0,0},
\label{qpoly}
\end{equation}
with
\[
\begin{array}{l}
q_{2,0} = -\EE_1\cdot\DD_1\times\DD_2,\cr
q_{1,0} = -\FF_1\cdot\DD_1\times\DD_2,\cr
\end{array}
\hskip 1cm
\begin{array}{l}
q_{0,2} = \EE_2\cdot\DD_1\times\DD_2,\cr
q_{0,1} = \FF_2\cdot\DD_1\times\DD_2,\cr
\end{array}
\]
\[
q_{0,0} = (\GG_2-\GG_1)\cdot\DD_1\times\DD_2.
\]
\begin{remark}
Using equations (\ref{rho1dot}), (\ref{rho2dot}) we can write
$\rhodot_1, \rhodot_2$ as quadratic polynomials in the variables
$\rho_1,\rho_2$.  This corresponds to using conservation of
angular momentum in the plane orthogonal to $\DD_1\times\DD_2$.
\end{remark}

The equations
\begin{equation}
  \lenz_1=\lenz_2, \qquad \energy_1 = \energy_2
  \label{energylenz}
\end{equation}
are algebraic but not polynomial, due to the terms ${ \mu/|\erre_j|}$.
We consider the equation
\begin{equation}
\bxi = \bzero,
\label{xi=0}
\end{equation}
with
\begin{eqnarray}
  \bxi
&=& [\mu(\lenz_1-\lenz_2) -
  (\energy_1\erre_1-\energy_2\erre_2)]\times(\erre_1-\erre_2)\label{xi}
\label{xidef}\\
&=& \frac{1}{2}(|\erredot_2|^2- |\erredot_1|^2)\erre_1\times\erre_2
- (\erredot_1\cdot\erre_1)\erredot_1\times(\erre_1-\erre_2) +
(\erredot_2\cdot\erre_2)\erredot_2\times(\erre_1-\erre_2) .\nonumber
\end{eqnarray}
Note that in equation \eqref{xi=0}, which is a consequence of
\eqref{energylenz}, the dependence on $\mu/|\erre_j|$ has been
canceled.

After eliminating $\rhodot_1$, $\rhodot_2$ by \eqref{rho1dot},
\eqref{rho2dot}, $\bxi$ becomes a bivariate vector polynomial with total degree 6,
that we still denote by $\bxi$. In the following, we consider the bivariate polynomial system
\begin{equation}
q = 0, \qquad \bxi = \bzero,
\label{qxi}
\end{equation}
which is a consequence of \eqref{fulls}.

\begin{remark}
The monomials of $\bxi$ with the highest
degree are all multiplied by $\erho_1\times\erho_2$. Therefore, the
two projections
\begin{equation}
p_1 = \bxi\cdot\erho_1,
\hskip 0.7cm
p_2 = \bxi\cdot\erho_2
\label{p1p2}
\end{equation}
lower the degree, and give two polynomials with total degree 5.
\end{remark}


\subsection{Consistency of equations \eqref{qxi}} 
  
We sketch the proof of the following result: full details are given in
\cite{gbm15}.
\begin{theorem}
For generic values of the data, the bivariate and overdetermined polynomial system
\[
q = 0, \qquad \bxi = \bzero
\]
is consistent.  Moreover, it can be reduced by elimination to a system of two
univariate polynomials of degree 10,
\begin{equation}
\mathfrak{u}_1 = \mathfrak{u}_2 = 0,
\label{u1u2}
\end{equation}
such that
\[
\mathfrak{u}=\mathrm{gcd}(\mathfrak{u}_1,\mathfrak{u}_2)
\] 
has degree 9.
\label{udeg9}
\end{theorem}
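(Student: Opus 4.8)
The plan is to convert the vector equation $\bxi=\bzero$ into two scalar equations, then to eliminate $\rho_1$ while keeping track of degrees. By \eqref{xidef}, $\bxi$ is the cross product of the vector $\mu(\lenz_1-\lenz_2)-(\energy_1\erre_1-\energy_2\erre_2)$ with $\erre_1-\erre_2$, so the identity $\bxi\cdot(\erre_1-\erre_2)\equiv 0$ holds for all $\rho_1,\rho_2$. Writing $\erre_1-\erre_2=(\bq_1-\bq_2)+\rho_1\erho_1-\rho_2\erho_2$ and expanding this identity gives
\[
\bxi\cdot(\bq_1-\bq_2) = -\rho_1\,p_1 + \rho_2\,p_2,
\]
with $p_1,p_2$ the projections in \eqref{p1p2}. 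For generic data the vectors $\erho_1,\erho_2,\bq_1-\bq_2$ are linearly independent, so the equations $p_1=p_2=0$ make $\bxi$ orthogonal to three linearly independent vectors and hence force $\bxi=\bzero$. Thus, generically, $\bxi=\bzero$ is equivalent to the two equations $p_1=p_2=0$, each of total degree $5$, and system \eqref{qxi} reduces to $q=p_1=p_2=0$ in the unknowns $\rho_1,\rho_2$.

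Next I would eliminate $\rho_1$ through the resultants
\[
\mathfrak{u}_1 = \mathrm{Res}_{\rho_1}(q,p_1), \qquad \mathfrak{u}_2 = \mathrm{Res}_{\rho_1}(q,p_2),
\]
viewing $q,p_1,p_2$ as polynomials in $\rho_1$ with coefficients in $\C[\rho_2]$. Since $\deg_{\rho_1}q=2$ and each $p_i$ has total degree $5$, a B\'ezout count for the plane curves $q=0$ and $p_i=0$ gives $2\cdot 5=10$ intersection points; verifying that for generic data none of them escapes to infinity in the $\rho_1$ direction shows that each $\mathfrak{u}_i$ has degree exactly $10$ in $\rho_2$. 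Every solution of $q=p_1=p_2=0$ projects to a common root of $\mathfrak{u}_1$ and $\mathfrak{u}_2$, that is, to a root of $\mathfrak{u}=\mathrm{gcd}(\mathfrak{u}_1,\mathfrak{u}_2)$; conversely, by the extension theorem of elimination theory a common root $\rho_2^\ast$ lifts to a genuine solution as soon as the leading coefficients in $\rho_1$ do not vanish simultaneously, which holds generically. Consequently the consistency of \eqref{qxi} is equivalent to $\mathfrak{u}$ being non-constant, and everything reduces to computing $\deg\mathfrak{u}$.

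Finally, the assertion $\deg\mathfrak{u}=9$ says that exactly one of the $10$ intersections of $q=0$ and $p_i=0$ is spurious, i.e. fails to lie on the remaining curve, and that the two spurious points (for $i=1$ and $i=2$) have distinct $\rho_2$-coordinates, so that $\mathfrak{u}_1$ and $\mathfrak{u}_2$ share precisely the $9$ genuine roots. To establish this I would study the elimination ideal $I\cap\C[\rho_2]$ of $I=(q,p_1,p_2)$ and show it is principal with generator $\mathfrak{u}$ of degree $9$. The lower bound $\deg\mathfrak{u}\ge 9$ follows from semicontinuity under specialization: a Gr\"obner basis of $I$ for an elimination order, computed for one random rational choice of the data $(\Att_j,\bq_j,\bqdot_j)$, produces a generator of degree $9$, and a specialization can only decrease the degree of the generic elimination polynomial. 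The upper bound $\deg\mathfrak{u}\le 9$ is the ``optimal property'' of the abstract, the genuine drop from the B\'ezout value $10$ to $9$. I expect this upper bound to be the main obstacle: one must locate the single spurious intersection intrinsically, check that $q=0$ and each $p_i=0$ share no extra component, and prove that the reduction by one occurs uniformly in the data. This is exactly where a Gr\"obner basis computation over the field of rational functions in the parameters, as used in \cite{gbm15}, becomes indispensable; together with the lower bound it yields $\deg\mathfrak{u}=9$, and in particular $\mathfrak{u}$ is non-constant, which also settles consistency.
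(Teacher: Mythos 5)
Your reduction of $q=0,\ \bxi=\bzero$ to the bivariate system $q=p_1=p_2=0$ via the identity $\bxi\cdot(\erre_1-\erre_2)\equiv 0$ is exactly the paper's first step, and the setup with the two resultants $\mathfrak{u}_1,\mathfrak{u}_2$ of degree $10$ is also the same. But the actual content of the theorem --- that each of the curves $p_1=0$ and $p_2=0$ meets $q=0$ in exactly one extraneous point, that these two extraneous points have distinct $\rho_2$-coordinates, and that the remaining $9$ intersections coincide --- is precisely the part you do not prove: you flag the upper bound $\deg\mathfrak{u}\le 9$ as ``the main obstacle'' and defer it to an unspecified Gr\"obner basis computation over the field of rational functions in the parameters. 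That is a genuine gap, not a routine verification: a certified symbolic computation with $12$ free parameters is not something one can simply invoke, and the semicontinuity argument you propose for the lower bound only transfers a degree bound from one random specialization, so by itself it establishes neither consistency nor the exact count for generic data.

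The paper closes this gap with a concrete geometric mechanism that your proposal misses entirely. The key observation (Property~\ref{geomrem}) is that on the curve $q=0$ the vectors $\erre_1,\erre_2,\erredot_1,\erredot_2$ are coplanar, so $\bxi$ is parallel to the common angular momentum $\angmom=\angmom_1=\angmom_2$; hence the projection $p_j=\bxi\cdot\erho_j$ can vanish on $q=0$ either because $\bxi=\bzero$ or because $\angmom\cdot\erho_j=0$. The extraneous intersections are therefore located explicitly: they are the points $P_1=(\rho_1'',\rho_2')$ and $P_2=(\rho_1',\rho_2'')$ of \eqref{rhojprimo}--\eqref{rhojsecondo} where the line $c_{jj}=\angmom_j\cdot\erho_j=0$ meets $q=0$, and Lemmas~\ref{lemma1}--\ref{lemma3} (uniqueness of the point $C$ where $\angmom_1=\angmom_2=\bzero$, the non-vanishing of $\bxi\cdot\erho_j$ at $C$, and the alternative \eqref{alternative}) show that $p_1(P_1)=0$ while $p_2(P_1)\neq 0$, and symmetrically for $P_2$. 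This pins down exactly one spurious root $\rho_2'$ of $\mathfrak{u}_1$ and one spurious root $\rho_2''$ of $\mathfrak{u}_2$, forces $\mathfrak{u}_1/(\rho_2-\rho_2')$ and $\mathfrak{u}_2/(\rho_2-\rho_2'')$ to be proportional, and yields both consistency ($9$ common solutions) and $\deg\mathfrak{u}=9$ without any Gr\"obner basis computation. To complete your argument you would need to supply this identification of the spurious points, or an equivalent intrinsic description of where B\'ezout overcounts.
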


\noindent {\em Sketch of the proof.}
Generically, relation
\[
(\bq_1-\bq_2)\cdot\erho_1\times\erho_2\neq 0
\]
holds, so that, from
\[
\bxi\cdot(\br_1-\br_2)=0\qquad \mbox{and}\qquad \br_j=\bq_j + \rho_j\erho_j\quad (j=1,2),
\]
we obtain
\[
\bxi = \bzero \qquad \mbox{if and only if} \qquad \bxi\cdot\erho_1 =
\bxi\cdot\erho_2 = 0.
\]
Thus, in place of \eqref{qxi} we can consider the bivariate and still
overdetermined system
\begin{equation}
q = p_1 = p_2 = 0.
\label{qp1p2}
\end{equation}
We note that,
if $(\rho_1, \rho_2)$ fulfills $q=0$, the vectors $\erre_1$,
$\erre_2$, $\erredot_1$, $\erredot_2$ all lie in the same plane.
This remark leads to the following geometrical fact:
\begin{property} For $(\rho_1, \rho_2)$ fulfilling $q=0$ the vector $\bxi$ is
parallel to the common value $\angmom=\angmom_1=\angmom_2$ of the
angular momentum.
\label{geomrem}
\end{property}
Each projection
\[
  p_j=\bxi\cdot\erho_j, \qquad j=1,2
\]
vanishes either if $\bxi=\bzero$, or if $\bxi$ is orthogonal to
$\erho_j$. By Property~\ref{geomrem}, when $q=0$ relation $\bxi\cdot\erho_j=0$ can
be checked using the angular momentum in place of $\bxi$.  For this purpose we introduce the
projections
\[
c_{ij} = \angmom_i\cdot\erho_j, \qquad i,j=1,2.
\]
The equations
\[
c_{11}(\rho_1,\rho_2) = 0, \qquad c_{22}(\rho_1,\rho_2) = 0
\]
define straight lines, while
\[
c_{12}(\rho_1,\rho_2) = 0, \qquad c_{21}(\rho_1,\rho_2) = 0
\]
define conic sections, see Figure~\ref{fig:cij}.

\begin{figure}[h!]
  \begin{center}
\begin{tikzpicture}
  \coordinate (O) at (0,0);
  \coordinate (xl) at (-1,0);
  \coordinate (xr) at (4.5,0);
  \coordinate (yd) at (0,-1.5);
  \coordinate (yu) at (0,3);
  
  \draw[-latex] (xl)--(xr) node [below] {$\rho_1$};
  \draw[-latex] (yd)--(yu) node [left] {$\rho_2$};

  \coordinate (c11d) at (3.07,-1.5);
  \coordinate (c11u) at (3.07,3);
  \coordinate (c22l) at (-1,1.57);
  \coordinate (c22r) at (4.5,1.57);
  \draw[dashed] (c11d)--(c11u);
  \draw[dashed] (c22l)--(c22r);
  \draw (c11u) node [right] {$c_{11}=0$};
  \draw (c22r) node [above] {$c_{22}=0$};

  \draw (1.8,2.9) node [above,color=blue] {$c_{12}=0$};
  \draw (4.,1) node [below,color=red] {$q=0$};
  \draw (1.5,-1.1) node [below,color=green] {$c_{21}=0$};
  
  \draw[scale=2.,color=red] (0.9,0.3) ellipse (25pt and 20pt); 
  \draw[scale=2.,color=blue] (0.9,1.02) ellipse (22pt and 12pt); 
  \draw[scale=2.,color=green] (0.715,0.3) ellipse (28pt and 25pt); 

  \coordinate (C) at (3.07,1.57);
  \draw (C) node [right] {$C$};
  \fill (C) circle (0.3mm);
  
  \coordinate (P1) at (3.07,-0.37);
  \draw (P1) node [right] {$P_1$};
  \fill (P1) circle (0.3mm);
  \draw[dotted] (-1,-0.37)--(4,-0.37);

  \draw (0.1,-0.3) node[left] {$\rho_2'$};
  \draw (0.1,1.75) node[left] {$\rho_2''$};
  
  \coordinate (P2) at (0.53,1.57);
  \draw (P2) node [above] {$P_2$};
  \fill (P2) circle (0.3mm);
  \draw[dotted] (0.53,-1.5)--(0.53,3);
  
  \draw (0.53,0.1) node[below] {$\rho_1'$};
  \draw (2.9,0.1) node[below] {$\rho_1''$};

\end{tikzpicture}
\end{center}
  \caption{Curves given by $q=0$, $c_{ij}=0$.}
  \label{fig:cij}
\end{figure}
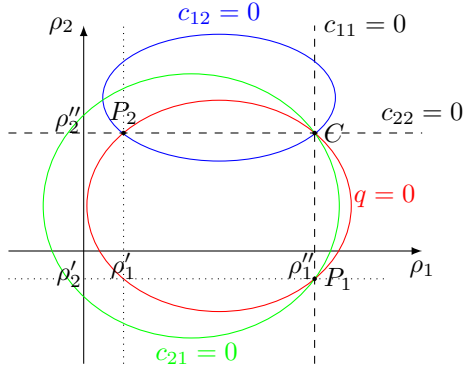


Set
\[
P_1 = (\rho_1'',\rho_2'), \qquad P_2 = (\rho_1',\rho_2''),\qquad C =
(\rho_1'',\rho_2''),
\]
where
\begin{equation}
\rho_1' = \frac{\qu_1\times\qu_2\cdot\erho_2}{\erho_1\times\erho_2\cdot\qu_2},
\qquad
\rho_2' = \frac{\qu_1\times\qu_2\cdot\erho_1}{\erho_1\times\erho_2\cdot\qu_1},
\label{rhojprimo}
\end{equation}
\begin{equation}
\rho_1'' =
\frac{\qu_1\times\qudot_1\cdot\erho_1}{\erho_1\times\eort_1\cdot\qu_1},
\qquad \rho_2'' =
\frac{\qu_2\times\qudot_2\cdot\erho_2}{\erho_2\times\eort_2\cdot\qu_2},
\label{rhojsecondo}
\end{equation}
with
\[
\eort_j = \alphadot_j\cos\delta_j\ealpha_j + \deltadot_j\edelta_j,
\qquad j=1,2.
\]
These points fulfill the relations
\[ 
\begin{split}
  &c_{11}(P_1)= q(P_1) = 0,\\
  &c_{22}(P_2) = q(P_2) = 0,\\
  &c_{11}(C) = c_{22}(C) = q(C) = 0.
\end{split}
\]

We use the following results, that hold generically, see \cite{gbm15}.

\begin{lemma}
  The point $C=(\rho_1'',\rho_2'')$
  satisfies
  \[{ \angmom_1(C) = \angmom_2(C)=\bzero}
  \]
  and $C$ is the unique point in the plane $\rho_1\rho_2$ where both
  angular momenta vanish.
\label{lemma1}
\end{lemma}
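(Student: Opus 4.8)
The plan is to promote the three scalar conditions $c_{11}(C)=c_{22}(C)=q(C)=0$, already recorded above, to the vanishing of the two full angular momentum vectors at $C$, and to extract uniqueness as a byproduct. The first remark I would make is that $c_{11}=\angmom_1\cdot\erho_1$ carries no dependence on $\rhodot_1$: since $\DD_1\cdot\erho_1=(\bq_1\times\erho_1)\cdot\erho_1=0$ and $\EE_1\cdot\erho_1=(\erho_1\times\eort_1)\cdot\erho_1=0$, the projection $c_{11}$ is an affine function of $\rho_1$ alone, and it vanishes exactly at $\rho_1=\rho_1''$; symmetrically $c_{22}$ is affine in $\rho_2$ and vanishes only at $\rho_2=\rho_2''$. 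This already disposes of uniqueness: wherever both $\angmom_1$ and $\angmom_2$ vanish one has in particular $c_{11}=c_{22}=0$, which forces $\rho_1=\rho_1''$ and $\rho_2=\rho_2''$, i.e. the point is $C$.

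For existence I would first observe that \eqref{rho1dot} and \eqref{rho2dot}, which define the eliminated $\rhodot_1,\rhodot_2$, are precisely the vanishing of the two components of $\angmom_1-\angmom_2$ lying in the plane spanned by $\DD_1,\DD_2$, while the remaining component, along $\DD_1\times\DD_2$, equals $-q$. Since these three directions form a basis when $\DD_1\times\DD_2\neq\bzero$, the relation $q(C)=0$ yields $\angmom_1(C)=\angmom_2(C)$; denote this common value by $\angmom$. Now I would use the two structural facts that $\angmom_j=\erre_j\times\erredot_j$ is orthogonal to $\erre_j=\bq_j+\rho_j\erho_j$ for every value of $\rhodot_j$, together with $\angmom\cdot\erho_1=c_{11}(C)=0$ and $\angmom\cdot\erho_2=c_{22}(C)=0$. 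Writing $\angmom\cdot\erre_j=\angmom\cdot\bq_j+\rho_j''(\angmom\cdot\erho_j)=0$ then gives $\angmom\cdot\bq_1=\angmom\cdot\bq_2=0$. Hence $\angmom$ is orthogonal to both $\bq_1,\erho_1$ and to both $\bq_2,\erho_2$, so it is parallel to $\DD_1=\bq_1\times\erho_1$ and simultaneously to $\DD_2=\bq_2\times\erho_2$; as $\DD_1$ and $\DD_2$ are generically independent this forces $\angmom=\bzero$, completing the existence part.

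The genericity hypotheses I would keep track of are exactly the points where denominators or spanning conditions enter: $\EE_j\cdot\bq_j\neq0$, so that $\rho_j''$ is well defined and $c_{jj}$ has nonzero slope, and $\DD_1\times\DD_2\neq\bzero$, used both to make the three projection directions a basis and, at the end, to kill $\angmom$ via $\DD_1\nparallel\DD_2$. I expect the only genuinely delicate step to be the bookkeeping that turns $q(C)=0$ into $\angmom_1(C)=\angmom_2(C)$: one must check that substituting the eliminated $\rhodot_1,\rhodot_2$ annihilates the two in-plane components of $\angmom_1-\angmom_2$ identically, leaving $q$ as the sole obstruction in the $\DD_1\times\DD_2$ direction. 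Once that is secured, the collapse of $\angmom$ through the two parallelism conditions is immediate, and uniqueness has already fallen out of the linearity of $c_{11}$ and $c_{22}$.
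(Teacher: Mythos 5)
The paper does not actually prove Lemma~\ref{lemma1}: it lists it among the ``results that hold generically'' and defers the proof to \cite{gbm15}, so there is no in-text argument to compare yours against; judged on its own terms, your proof is correct. The uniqueness half is the cleanest part: since $\DD_j\cdot\erho_j=\EE_j\cdot\erho_j=0$, the projection $c_{jj}=\angmom_j\cdot\erho_j=(\FF_j\cdot\erho_j)\rho_j+\GG_j\cdot\erho_j$ is indeed affine in $\rho_j$ alone, its slope $\FF_j\cdot\erho_j=-\,\erho_j\times\eort_j\cdot\bq_j$ is exactly the (generically nonzero) denominator of $\rho_j''$, and its unique zero is $\rho_j''$; this is consistent with the paper's statement that $c_{11}=0$ and $c_{22}=0$ are straight lines, and it immediately pins any common zero of $\angmom_1,\angmom_2$ down to $C$. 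The existence half is also sound: after eliminating $\rhodot_1,\rhodot_2$ via \eqref{rho1dot}--\eqref{rho2dot} one has identically $\angmom_1-\angmom_2=-q\,(\DD_1\times\DD_2)/|\DD_1\times\DD_2|^2$, which is precisely the content of the paper's remark that the elimination ``corresponds to using conservation of angular momentum in the plane orthogonal to $\DD_1\times\DD_2$''; then from $\angmom\perp\erho_1,\bq_1$ and $\angmom\perp\erho_2,\bq_2$ you correctly get $\angmom\parallel\DD_1$ and $\angmom\parallel\DD_2$, hence $\angmom=\bzero$ whenever $\DD_1\times\DD_2\neq\bzero$. The one external input is the relation $q(C)=0$, which the paper records (without derivation) just before the lemma; your argument inherits that assertion rather than verifying it, so a fully self-contained proof would still require checking that identity by direct computation. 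Your bookkeeping of the genericity hypotheses, namely $\EE_j\cdot\bq_j\neq 0$ and $\DD_1\times\DD_2\neq\bzero$, is accurate and matches where they are actually used.
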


\begin{lemma}
  In $C$ we have $\bxi\cdot\erho_1\neq 0$
  and $\bxi\cdot\erho_2\neq 0$.
  \label{lemma2}
\end{lemma}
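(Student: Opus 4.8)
Since the claim is local at the single point $C=(\rho_1'',\rho_2'')$, the plan is to evaluate $\bxi$ there by exploiting Lemma~\ref{lemma1}, and then to read off the two projections. By Lemma~\ref{lemma1} we have $\angmom_1(C)=\angmom_2(C)=\bzero$, i.e. $\erre_j\times\erredot_j=\bzero$ at $C$; since generically $\erre_j=\bq_j+\rho_j''\erho_j\neq\bzero$, this means the velocities are radial at $C$, so I may write
\[
\erredot_1=\sigma_1\,\erre_1,\qquad \erredot_2=\sigma_2\,\erre_2
\]
at $C$, for suitable scalars $\sigma_1,\sigma_2$. The whole point is that this collapses the explicit expression for $\bxi$ in \eqref{xidef} to a single multiple of $\erre_1\times\erre_2$.

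Carrying out the substitution, I use $\erre_j\times\erre_j=\bzero$ to get, at $C$,
\[
\erredot_1\times(\erre_1-\erre_2)=-\sigma_1\,\erre_1\times\erre_2,\qquad
\erredot_2\times(\erre_1-\erre_2)=-\sigma_2\,\erre_1\times\erre_2,
\]
together with $\erredot_j\cdot\erre_j=\sigma_j|\erre_j|^2$ and $|\erredot_j|^2=\sigma_j^2|\erre_j|^2$. Substituting these identities into \eqref{xidef} and collecting the coefficient of $\erre_1\times\erre_2$, the three terms combine to
\[
\bxi(C)=\tfrac12\bigl(\sigma_1^2|\erre_1|^2-\sigma_2^2|\erre_2|^2\bigr)\,\erre_1\times\erre_2
=\tfrac12\bigl(|\erredot_1|^2-|\erredot_2|^2\bigr)\,\erre_1\times\erre_2 .
\]
Thus at $C$ the vector $\bxi$ is parallel to $\erre_1\times\erre_2$; this is consistent with Property~\ref{geomrem} becoming vacuous when $\angmom=\bzero$, so that $\bxi(C)$ is not tied to the (vanishing) angular momentum.

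It remains to show that both scalar factors entering $\bxi(C)\cdot\erho_j$ are generically nonzero. First, expanding $\erre_k=\bq_k+\rho_k''\erho_k$ and using that triple products with a repeated vector vanish,
\[
(\erre_1\times\erre_2)\cdot\erho_1=\erho_1\cdot(\bq_1\times\bq_2)+\rho_2''\,\erho_1\cdot(\bq_1\times\erho_2),
\]
and symmetrically for $\erho_2$; each is a nontrivial affine expression in the data and in the determined quantities $\rho_1'',\rho_2''$, hence nonzero outside a proper algebraic subset (and in particular $\erre_1\times\erre_2\neq\bzero$ there). Second, I must rule out $|\erredot_1|^2=|\erredot_2|^2$ at $C$: unlike on a true solution, at $C$ neither the energy nor the Laplace--Lenz equations are imposed, so there is no structural reason for the two speeds to coincide, and I would establish $|\erredot_1|^2-|\erredot_2|^2\not\equiv 0$ on the data space by exhibiting a single configuration (two radial motions seen from distinct observer positions with distinct attributables) for which the speeds at $C$ differ. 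Granting these two facts, $\bxi(C)\cdot\erho_1\neq0$ and $\bxi(C)\cdot\erho_2\neq0$, which is the assertion of the lemma.

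The main obstacle is exactly this last speed inequality: every other ingredient follows directly from Lemma~\ref{lemma1} and elementary vector algebra, whereas the non-vanishing of $|\erredot_1|^2-|\erredot_2|^2$ at $C$ cannot be deduced from the reduced system itself, since $C$ need not solve it, and must be argued as a genuine open condition on $(\Att_j,\bq_j,\bqdot_j)$, as carried out in \cite{gbm15}.
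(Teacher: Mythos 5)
The paper does not actually prove Lemma~\ref{lemma2}; it only asserts that it holds generically and defers to \cite{gbm15}, so the comparison is with that reference rather than with an in-text argument. Your reconstruction is the right one and matches the argument there: using Lemma~\ref{lemma1} to write $\erredot_j=\sigma_j\erre_j$ at $C$ and collapsing \eqref{xidef} to $\bxi(C)=\tfrac12\bigl(|\erredot_1|^2-|\erredot_2|^2\bigr)\,\erre_1\times\erre_2$ is correct (I have checked the coefficient: the second and third terms contribute $\sigma_1^2|\erre_1|^2$ and $-\sigma_2^2|\erre_2|^2$, which combine with the first term as you state), and it correctly reduces the lemma to two nondegeneracy conditions on the data. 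The only incomplete step is the one you flag yourself: you do not actually exhibit a configuration with $|\erredot_1|^2\neq|\erredot_2|^2$ at $C$, nor verify that $(\erre_1\times\erre_2)\cdot\erho_j$ is not identically zero as a function of $(\Att_j,\bq_j,\bqdot_j)$; for a ``generic data'' statement this is the standard remaining check, and it is also precisely what the paper itself leaves to \cite{gbm15}.
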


\begin{lemma}
  Assume $q=0$. Then $\bxi=\bzero$ is equivalent to
  \begin{equation}
    {\left\{
      \begin{array}{l}
        \bxi\cdot\erho_1 =0\cr
        \angmom\cdot\erho_1 \neq 0\cr
      \end{array}
      \right.
      \quad \mbox{ or }\quad
      \left\{
      \begin{array}{l}
        \bxi\cdot\erho_2 = 0\cr
        \angmom\cdot\erho_2 \neq 0\cr
      \end{array}
      \right.
      .}
    \label{alternative}
  \end{equation}
  \label{lemma3}
\end{lemma}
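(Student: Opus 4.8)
The plan is to reduce everything to the collinearity statement of Property~\ref{geomrem}. Under the standing hypothesis $q=0$ we have $\angmom_1=\angmom_2=\angmom$, so $c_{jj}=\angmom\cdot\erho_j$ for $j=1,2$, and Property~\ref{geomrem} tells us that $\bxi$ and $\angmom$ are collinear, $\bxi\times\angmom=\bzero$. When $\angmom\neq\bzero$ this means $\bxi=\lambda\,\angmom$ for a scalar $\lambda=\lambda(\rho_1,\rho_2)$, and the whole lemma becomes a statement about when the single scalar $\lambda$ vanishes. I would record this reformulation first and then prove the two implications of the equivalence.

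The implication from either alternative to $\bxi=\bzero$ is the short one. Assuming the left branch of \eqref{alternative}, the condition $\angmom\cdot\erho_1\neq 0$ forces $\angmom\neq\bzero$, hence $\bxi=\lambda\,\angmom$; then $0=\bxi\cdot\erho_1=\lambda\,(\angmom\cdot\erho_1)$ yields $\lambda=0$, i.e. $\bxi=\bzero$. The right branch is handled identically with $\erho_2$. For the converse, $\bxi=\bzero$ gives $\bxi\cdot\erho_1=\bxi\cdot\erho_2=0$ for free, so all that remains is to rule out the simultaneous vanishing $\angmom\cdot\erho_1=\angmom\cdot\erho_2=0$. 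Here I would argue by contradiction: under $q=0$ these two conditions read $c_{11}=c_{22}=0$, and since $c_{11}=0$ and $c_{22}=0$ are the two lines of Figure~\ref{fig:cij} that meet only at $C=(\rho_1'',\rho_2'')$ (as one checks from \eqref{rhojsecondo}), the pair $(\rho_1,\rho_2)$ would have to equal $C$; but Lemma~\ref{lemma2} asserts $\bxi\cdot\erho_1\neq 0$ at $C$, contradicting $\bxi=\bzero$. Hence at least one of the projections $\angmom\cdot\erho_j$ is nonzero, and one alternative of \eqref{alternative} holds.

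The delicate point, and the reason Lemmas~\ref{lemma1} and~\ref{lemma2} are invoked, is the degenerate locus $\angmom=\bzero$: there Property~\ref{geomrem} is vacuous and the substitution $\bxi=\lambda\,\angmom$ is unavailable, so one cannot conclude $\bxi=\bzero$ from collinearity alone. Lemma~\ref{lemma1} confines this locus to the single point $C$ (equivalently, identifies $\{c_{11}=0\}\cap\{c_{22}=0\}$ with $C$), and Lemma~\ref{lemma2} certifies that $\bxi$ does not vanish there; together they close the converse. The only routine verification left is that, for generic data, $c_{11}=0$ and $c_{22}=0$ are indeed distinct lines intersecting exactly at $C$, which follows from the explicit form of the coefficients computed earlier.
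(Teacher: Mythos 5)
Your argument is correct. The paper itself states Lemma~\ref{lemma3} without proof (deferring to \cite{gbm15}), but your reasoning uses exactly the ingredients the surrounding text assembles for this purpose: Property~\ref{geomrem} gives $\bxi=\lambda\angmom$ whenever $\angmom\neq\bzero$ on $\{q=0\}$, which settles the easy implication, and the converse reduces to excluding $c_{11}=c_{22}=0$, which you correctly handle by noting that these are the vertical and horizontal lines through $C=(\rho_1'',\rho_2'')$ (since $\DD_j\cdot\erho_j=\EE_j\cdot\erho_j=0$ makes $c_{jj}$ linear in $\rho_j$ alone) and then invoking Lemma~\ref{lemma2} at $C$. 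The only cosmetic caveat is your parenthetical ``equivalently'': Lemma~\ref{lemma1} describes the locus $\{\angmom_1=\angmom_2=\bzero\}$, which is not literally the same condition as $c_{11}=c_{22}=0$; the identification of $\{c_{11}=0\}\cap\{c_{22}=0\}$ with $\{C\}$ rests on the explicit linear forms in \eqref{rhojsecondo}, which you do cite, so this does not affect the validity of the proof.
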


\noindent Using these lemmas, first we show that system \eqref{qp1p2}
has {\em at least} 9 solutions (in the complex field $\C$).
By Lemma~\ref{lemma3}, system \eqref{qp1p2} is generically equivalent to 
\begin{equation}
\left\{
\begin{array}{l}
q=p_1 =0\cr
c_{11} \neq 0\cr
\end{array}
\right.
\quad \mbox{ or }\quad
\left\{
\begin{array}{l}
q=p_2 = 0\cr
c_{22} \neq 0\cr
\end{array}
\right.
.
\label{sysaltern}
\end{equation}
Both systems $q=p_1=0$ and $q=p_2=0$ generically define 10 points in $\C^2$.
Moreover, for $q=0$, relation $c_{11}\neq 0$ discards the points $P_1,
C$, while relation $c_{22}\neq 0$ discards $P_2,C$. In any case, by
Lemma~\ref{lemma2}, $C$ generically neither belongs to the curve
$p_1=0$, nor to the curve $p_2=0$.

We can prove that
\[
p_1(P_1) = p_2(P_2) = 0.
\]
Let us show only that
\[
p_1(P_1) = 0,
\]
the proof of $p_2(P_2)=0$ being similar.  If $\bxi(P_1)=\bzero$, then the
result holds trivially.  Assume $\bxi(P_1)\neq \bzero$.
We have $q(P_1)=0$, therefore $\angmom_1(P_1) = \angmom_2(P_1)=:\angmom(P_1)$.
Since
generically $P_1\neq C$, by Lemma~\ref{lemma1} we have
$\angmom(P_1)\neq\bzero$, and $\angmom(P_1)$ is parallel to
$\bxi(P_1)$ by Property~\ref{geomrem}.  From $c_{11}(P_1)= 0$ we conclude
that $p_1(P_1) = 0$ because $c_{11}, p_1$ are the projections of
$\angmom, \bxi$ onto $\erho_1$.

Then, we are left with 9 solutions for both systems in
\eqref{sysaltern}, implying a lower bound of 9 solutions for
\eqref{qxi}.

Now we show that \eqref{qp1p2} has exactly 9 solutions.  By Bezout's
theorem we know that it has at most 10 solutions, because both systems $p_1=q=0$ and $p_2=q=0$ have 10 solutions each.  Moreover,
generically we have
\begin{equation}
p_1(P_2)\neq 0, \qquad p_2(P_1)\neq 0.
\label{Pj_not_sol}
\end{equation}
Using
\[
p_1(P_1) = q(P_1) = 0,\qquad p_2(P_2) = q(P_2) = 0
\]
and the lower bound above, we conclude that \eqref{qp1p2} has generically 9
solutions, and the two systems in \eqref{sysaltern} share the same
solutions.

Consider the univariate polynomials
\[
{ \mathfrak{u}_1 = \mathrm{res}(p_1,q,\rho_1), \qquad
  \mathfrak{u}_2 = \mathrm{res}(p_2,q,\rho_1)}
\]
given by the resultant of the pairs $(p_j,q)$ with respect to $\rho_1$, see \cite{cox05}.
%
The quantities $\rho_2'$ and $\rho_2''$ are roots of $\mathfrak{u}_1(\rho_2)$ and
$\mathfrak{u}_2(\rho_2)$ respectively, because they are the $\rho_2$ components of
$P_1$ and $P_2$. By \eqref{Pj_not_sol} we have
\[
\mathfrak{u}_1(\rho_2'')\neq 0,\qquad \mathfrak{u}_2(\rho_2')\neq 0,
\]
therefore $\rho_2'$ and $\rho_2''$ do not solve \eqref{u1u2}.
%
Then we consider
\[
\tilde{\mathfrak{u}}_1 = \frac{\mathfrak{u}_1}{\rho_2-\rho_2'},
  \qquad\tilde{\mathfrak{u}}_2 =
  \frac{\mathfrak{u}_2}{\rho_2-\rho_2''}.
\]
By the previous discussion we must have
\[
{ \tilde{\mathfrak{u}}_1 = c\tilde{\mathfrak{u}}_2,}
\]
with $c$ a non-zero constant, so that the univariate polynomial
\begin{equation}
{ \mathfrak{u} = \mathrm{gcd}(\mathfrak{u}_1,\mathfrak{u}_2)}
\label{upoly}
\end{equation}
has degree 9.
This completes the proof of the theorem.

\rightline{$\square$}

\section{An optimal property of the polynomial $\oldu$}


In Remark~\ref{rem:z} we observed that the Keplerian integrals can be
viewed as polynomials in the variables $\rho,\rhodot,z$ by writing $z$
in place of $\mu/|\erre|$.  Therefore, we can consider the polynomial
system
\begin{equation}
{\angmom_1=\angmom_2,\ \ \mu\tilde\lenz_1=\mu\tilde\lenz_2,\ \
\tilde\energy_1=\tilde\energy_2,\ \ 
z^2_1|\erre_1|^2 = \mu^2,\ \
z^2_2|\erre_2|^2 = \mu^2 ,}
\label{fullsys}
\end{equation} 
of 9 equations in the 6 unknowns
\[
  { \rho_1, \rho_2, \rhodot_1, \rhodot_2, z_1, z_2.}
\]




In the next section we prove that dropping the last two equations in
(\ref{fullsys}) we obtain a consistent polynomial system:
\begin{equation}
\angmom_1=\angmom_2,\quad \mu\tilde{\lenz}_1=\mu\tilde{\lenz}_2,\quad \tilde{\energy}_1=\tilde{\energy}_2.
\label{allbutuj}
\end{equation}
As a consequence of the proof we shall obtain that the univariate polynomial
\[
{ \oldu =
\mathrm{gcd}(\mathfrak{u}_1, \mathfrak{u}_2)}
\] 
of degree 9 has the {\em minimum degree} among the polynomials in
$\rho_1$ or $\rho_2$ contained in the
ideal
\[
{ I =
  \langle\angmom_1-\angmom_2,\ \mu(\tilde{\lenz}_1-\tilde{\lenz}_2),\ \tilde{\energy}_1-\tilde{\energy}_2\rangle\subseteq
  \R[\rho_1,\rho_2,\rhodot_1,\rhodot_2,z_1,z_2]}.
\]

\subsection{A Gr\"obner basis for the ideal $I$}

The following result holds true.

\begin{theorem}  For generic data $\Att_j,\bq_j,\bqdot_j$,
  $j=1,2$, we can find a set of polynomials
\[
\{\mathfrak{g}_1,\ldots,\mathfrak{g}_6\} \subset 
\R[\rho_1,\rho_2,\rhodot_1,\rhodot_2,z_1,z_2]
\]
that is a Gr\"obner basis of the ideal $I$ for the
lexicographic order with
\begin{equation}
\rhodot_1\succ \rhodot_2\succ z_1\succ z_2 \succ \rho_1 \succ \rho_2,
\label{lexord}
\end{equation}
and such that
\[
\mathfrak{g}_6 = \mathfrak{u}.
\]
\label{groebteo}
\end{theorem}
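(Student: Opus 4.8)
The plan is to construct the Gr\"obner basis explicitly by performing Buchberger-type elimination on the generators of $I$, exploiting the special structure of the lexicographic order in \eqref{lexord}, and then to identify the last element $\mathfrak{g}_6$ with the univariate polynomial $\mathfrak{u}$ of Theorem~\ref{udeg9}. The key observation is that the chosen variable order places $\rhodot_1,\rhodot_2$ highest, followed by the auxiliary variables $z_1,z_2$, with the radial distances $\rho_1,\rho_2$ lowest. This is exactly the order in which the elimination in Section~\ref{s:kepintlink} proceeds: first one removes $\rhodot_1,\rhodot_2$, then one eliminates the dependence on $\mu/|\erre_j|$ (now encoded by $z_1,z_2$), and finally one is left with equations in $\rho_1,\rho_2$ alone. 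So the structure of $I$ is already adapted to producing a Gr\"obner basis with a triangular, elimination-friendly shape, and $\mathfrak{g}_6$ should be the unique (up to scalar) generator of the elimination ideal $I\cap\R[\rho_2]$.

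First I would record that equations \eqref{rho1dot}--\eqref{rho2dot}, coming from the angular momentum conservation $\angmom_1=\angmom_2$, express $\rhodot_1$ and $\rhodot_2$ as explicit quadratic polynomials in $\rho_1,\rho_2$; these, together with the scalar relation $q(\rho_1,\rho_2)=0$, already give three of the six basis elements $\mathfrak{g}_1,\mathfrak{g}_2,\mathfrak{g}_3$ that eliminate $\rhodot_1,\rhodot_2$ and have leading terms in $\rhodot_1$, $\rhodot_2$, respectively (the polynomial $q$ contributing the purely $\rho$-dependent relation). Next I would handle the variables $z_1,z_2$: the relations $\tilde\energy_1=\tilde\energy_2$ and $\mu\tilde\lenz_1=\mu\tilde\lenz_2$ are linear in $z_1,z_2$ once $\rhodot_1,\rhodot_2$ have been substituted, so one can solve for $z_1$ and $z_2$ in terms of $\rho_1,\rho_2$, producing two further basis elements $\mathfrak{g}_4,\mathfrak{g}_5$ with leading terms in $z_1$ and $z_2$. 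After these substitutions, every remaining consequence of the system lies in $\R[\rho_1,\rho_2]$, and by the analysis already carried out (the passage from \eqref{qxi} to \eqref{qp1p2} and the resultant computation) the elimination ideal in $\rho_2$ alone is generated by $\mathfrak{u}=\mathrm{gcd}(\mathfrak{u}_1,\mathfrak{u}_2)$, which furnishes $\mathfrak{g}_6=\mathfrak{u}$.

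The main obstacle I expect is verifying the Buchberger criterion, i.e. that all $S$-polynomials of the proposed generators reduce to zero modulo $\{\mathfrak{g}_1,\ldots,\mathfrak{g}_6\}$, so that this set is genuinely a Gr\"obner basis and not merely a triangular generating set of the correct elimination shape. In particular, one must check that no lower-degree univariate polynomial in $\rho_2$ sneaks into $I$ during reduction of the $S$-polynomials involving $\mathfrak{g}_4,\mathfrak{g}_5$ (the $z$-relations) with $\mathfrak{g}_3$ and the Lenz-vector equations; this is where the algebra is heaviest and where genericity of the data $\Att_j,\bq_j,\bqdot_j$ is essential, since it guarantees the relevant leading coefficients and the denominators in \eqref{rhojprimo}--\eqref{rhojsecondo} are nonzero. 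Establishing that $\mathfrak{g}_6$ has exactly degree $9$, and hence minimal degree among univariate elements of $I$, then follows by combining the reduction argument with the degree count from Theorem~\ref{udeg9}: any element of $I\cap\R[\rho_2]$ must be divisible by $\mathfrak{g}_6$, so $\mathfrak{u}$ is the minimum-degree such polynomial.
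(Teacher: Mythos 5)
There is a genuine gap, and it sits exactly where the real work of the paper's proof lies. Your proposed basis consists of the two angular-momentum relations with leading terms $\rhodot_1,\rhodot_2$, the polynomial $q$ of \eqref{qpoly}, two relations with leading terms $z_1,z_2$, and finally $\mathfrak{u}$. But this set cannot be a Gr\"obner basis of $I$ for the order \eqref{lexord}: the ideal $I$ contains a polynomial $\mathfrak{w}=\rho_1+\mathfrak{z}(\rho_2)$ that is \emph{linear} in $\rho_1$ (equation \eqref{vudoppio}), and its leading monomial $\rho_1$ is divisible by none of your leading monomials $\rhodot_1,\rhodot_2,\rho_1^2,z_1,z_2,\rho_2^9$. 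So the defining property of a Gr\"obner basis fails for your set, independently of any $S$-polynomial computation. The element you are missing, $\mathfrak{w}$, is precisely the fifth generator $\mathfrak{g}_5$ in the paper, and $q$ does not appear in the final basis at all: the paper shows that $q$ lies in $\langle\mathfrak{w},\newu\rangle$.

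Producing $\mathfrak{w}$ and proving that the bivariate elimination ideal $J=I\cap\R[\rho_1,\rho_2]=\langle q,p_1,p_2\rangle$ equals $\langle\mathfrak{w},\newu\rangle$ is the heart of the argument, and your proposal does not engage with it. The paper first reduces $p_1,p_2$ modulo $q$ to polynomials $\tilde{p}_1,\tilde{p}_2$ of degree one in $\rho_1$, then uses the genericity fact $\mathrm{gcd}(\tilde{a}_{1,1},\tilde{a}_{2,1})=1$ (relation \eqref{gcduno}) to build $\mathfrak{w}$ via a B\'ezout combination and the resultant $\newu=\mathrm{res}(\tilde p_1,\tilde p_2,\rho_1)$; it then shows $V(\langle\tilde p_1,\tilde p_2\rangle)=V(J)$ by matching the $9$-point count of Theorem~\ref{udeg9} with square-freeness of $\oldu$ and $\newu$, and finally upgrades equality of varieties to equality of ideals by dividing an arbitrary $h\in J$ by the monic-in-$\rho_1$ polynomial $\mathfrak{w}$ and using square-freeness again to force $\newu$ to divide the remainder. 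None of this is a routine Buchberger computation. Conversely, the step you single out as the ``main obstacle'' --- verifying that all $S$-polynomials reduce to zero --- is actually the trivial part once the correct six generators are in hand: their leading monomials $\rhodot_1,\rhodot_2,z_1,z_2,\rho_1,\rho_2^9$ are pairwise relatively prime, so the first Buchberger criterion disposes of every $S$-pair at once. Your claim that $I\cap\R[\rho_2]$ is generated by $\mathrm{gcd}(\mathfrak{u}_1,\mathfrak{u}_2)$ is likewise a consequence of the completed Gr\"obner-basis construction, not an input to it, so as written the argument is circular at that point.
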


We recall the following definition.
\begin{definition}
  A set $\{\mathfrak{g}_1,\ldots,\mathfrak{g}_n\}$, with $n\in\N$, is
  a {\bf Gr\"obner basis} of a polynomial ideal $I$ for a fixed monomial
  order $\succ$ if and only if the leading term (for that order) of
  any element of $I$ is divisible by the leading term of one
  $\mathfrak{g}_j$.
  \end{definition}

{\bf Proof.}  
For a generic choice of the data we consider the following 
set of generators of $I$:
{
\begin{eqnarray*}
\mathfrak{q}_1 &=& (\angmom_1-\angmom_2)\cdot\DD_1\times\DD_2,\\
\mathfrak{q}_2 &=& (\angmom_1-\angmom_2)\cdot\DD_1\times(\DD_1\times\DD_2),\\
\mathfrak{q}_3 &=& (\angmom_1-\angmom_2)\cdot\DD_2\times(\DD_1\times\DD_2),\\
\mathfrak{q}_4 &= &\mu(\tilde{\lenz}_1-\tilde{\lenz}_2)\cdot\erho_1\times\erho_2,\\
\mathfrak{q}_5 &= &\mu(\tilde{\lenz}_1-\tilde{\lenz}_2)\cdot\DD_1,\\
\mathfrak{q}_6 &= &\mu(\tilde{\lenz}_1-\tilde{\lenz}_2)\cdot\DD_2,\\
\mathfrak{q}_7 &=& \tilde{\energy}_1-\tilde{\energy}_2.
\end{eqnarray*}
}
The first three polynomials have the form
\begin{eqnarray*}
\mathfrak{q}_1 &=& q,\\
\mathfrak{q}_2 &=& |\DD_1\times\DD_2|^2\rhodot_1 - 
\JJ\cdot \DD_1\times(\DD_1\times\DD_2),\\
\mathfrak{q}_3 &=&  |\DD_1\times\DD_2|^2\rhodot_2 - 
\JJ\cdot \DD_2\times(\DD_1\times\DD_2),
\end{eqnarray*}
where $q$ and $\JJ$ are
defined in \eqref{qpoly} and \eqref{JJ}.
The other generators of $I$ can be written as  
\begin{eqnarray*}
\mathfrak{q}_4 &=& -(\DD_1\cdot\erho_2)z_1 -(\DD_2\cdot\erho_1)z_2 + \mathfrak{f}_4,\\
\mathfrak{q}_5 &=& -(\DD_2\cdot\erre_1)z_1 + \mathfrak{f}_5,\\
\mathfrak{q}_6 &=& (\DD_1\cdot\erre_2)z_2 + \mathfrak{f}_6,\\
\mathfrak{q}_7 &=& -z_1 + z_2 + \mathfrak{f}_7,
\end{eqnarray*}
for some polynomials $\mathfrak{f}_j =
\mathfrak{f}_j(\rho_1,\rho_2,\rhodot_1,\rhodot_2)$.
We can substitute $\mathfrak{q}_4,\ldots,\mathfrak{q}_7$ with
\begin{eqnarray*}
  \mathfrak{p}_4 &=& 
  -(\DD_2\cdot\erho_1)\mathfrak{q}_{7}-
  \mathfrak{q}_{4}
  = A z_1 + \mathfrak{a}_1,\\
  \mathfrak{p}_5 &=& (\DD_1\cdot\erho_2)\mathfrak{q}_{7}-
  \mathfrak{q}_{4} = A z_2 + \mathfrak{a}_2,\\
  \mathfrak{p}_6 &=& 
  (\DD_1\cdot\erre_2)\mathfrak{p}_{5}-
  A\mathfrak{q}_6,\\
  \mathfrak{p}_7 &=& 
  (\DD_2\cdot\erre_1)\mathfrak{p}_{4}+
  A\mathfrak{q}_5,
\end{eqnarray*}
where
\[
A= \DD_1\cdot\erho_2 + \DD_2\cdot\erho_1 = (\bq_1-\bq_2)\cdot\erho_1\times\erho_2,
\]
for some polynomials $\mathfrak{a}_j =
\mathfrak{a}_j(\rho_1,\rho_2,\rhodot_1,\rhodot_2)$.
The monomials containing $z_1$, $z_2$ cancel out in
$\mathfrak{p}_{6}$, $\mathfrak{p}_{7}$.

Using $\mathfrak{q}_2=\mathfrak{q}_3=0$, we
eliminate $\rhodot_1$, $\rhodot_2$ from $\mathfrak{p}_4,\ldots,
\mathfrak{p}_7$: we call $\hat{\mathfrak{p}}_4,\ldots,
\hat{\mathfrak{p}}_7$ the polynomials obtained in this way.
It can be shown that
\begin{equation}
{ \hat{\mathfrak{p}}_6 =
-(\DD_1\cdot\erho_2)p_1,\hskip 1cm
\hat{\mathfrak{p}}_7 = 
(\DD_2\cdot\erho_1)p_2,}
\label{p67tilde}
\end{equation}
where $p_1$, $p_2$ are the bivariate polynomials defined in (\ref{p1p2}).

Therefore, the elimination ideal
\[
J := I\cap \R[\rho_1,\rho_2]
\]
is generated by $q,p_1,p_2$:
\[
J = \langle q,p_1,p_2\rangle.
\]
Let us write
\[
q(\rho_1,\rho_2) = \sum_{h=0}^2b_h(\rho_2)\rho_1^h,
\]
with
\[
b_0(\rho_2) = q_{0,2}\rho_2^2 + q_{0,1}\rho_2 + q_{0,0}, \qquad b_1 =
q_{1,0},\qquad b_2=q_{2,0}.
\]
Assuming $q_{2,0}\neq 0$, that generically holds,
let us set
\begin{equation}
  \begin{split}
    &\beta_1=1,\qquad \beta_2 = -\frac{b_1}{b_2}, \qquad
    \gamma_2=-\frac{b_0}{b_2},\\
    &\beta_{h+1}=\beta_h\beta_2+\gamma_h,\qquad
    \gamma_{h+1}=\beta_h\gamma_2,\qquad h=2,3,4.
  \end{split}
  \label{betagamma}
\end{equation}
Moreover we introduce the polynomials
\begin{equation}
  \eta_h(\rho_1) = \frac{1}{b_2}\sum_{j=0}^{h-1}\beta_{h-j}\rho_1^{j},\qquad h=1,\ldots,4.
  \label{etapol}
\end{equation}
With this notation we have
\begin{equation}
\rho_1^{h+1} = \eta_hq + \beta_{h+1}\rho_1 + \gamma_{h+1},\qquad h=1,\ldots,4.
  \label{rho1powers}
\end{equation}
The generators $p_1$, $p_2$ can be written as
\[
p_1(\rho_1,\rho_2) = \sum_{h=0}^4a_{1,h}(\rho_2)\rho_1^h, \qquad
p_2(\rho_1,\rho_2) = \sum_{h=0}^5a_{2,h}(\rho_2)\rho_1^h,
\]
for some polynomials $a_{i,j}$, so that
\[
\tilde{p}_1 = p_1 - q\sum_{j=1}^3a_{1,j+1}\eta_j,\qquad
\tilde{p}_2 = p_2 - q\sum_{j=1}^4 a_{2,j+1}\eta_j
\]
belong to the ideal $J$ and can be written as
\[
\tilde{p}_1 = \tilde{a}_{1,1}(\rho_2)\rho_1 + \tilde{a}_{1,0}(\rho_2), \qquad
\tilde{p}_2 = \tilde{a}_{2,1}(\rho_2)\rho_1 + \tilde{a}_{2,0}(\rho_2),
\]
with
\[
\begin{split}
&\tilde{a}_{1,1} = a_{1,1} + \sum_{h=2}^4a_{1,h}\beta_h, \qquad
  \tilde{a}_{1,0} = a_{1,0} + \sum_{h=2}^4a_{1,h}\gamma_h, \qquad\\
  &\tilde{a}_{2,1} = a_{2,1} + \sum_{h=2}^5a_{2,h}\beta_h, \qquad
  \tilde{a}_{2,0} = a_{2,0} + \sum_{h=2}^5a_{2,h}\gamma_h.
\end{split}
\]
Then we have
\[
J = \langle q,\tilde{p}_1, \tilde{p}_2\rangle.
\]
\smallbreak
Now we set
\[
J_1 =
\langle\tilde{p}_1,\tilde{p}_2\rangle
\]
and prove that 
\[
J = J_1,
\]
that is, we can generate $J$ with two polynomials only.
First we show that
\begin{equation}
V(J_1) = V(J),
\label{equalvariety}
\end{equation}
where the variety $V(K)$ of a polynomial ideal
$K\subseteq\R[\rho_1,\rho_2]$ is the set
\[
V(K) = \{(\rho_1,\rho_2)\in\C^2:p(\rho_1,\rho_2)=0,\ \forall p\in K\}.
\]
From $J_1 \subseteq J$ we have
\begin{equation}
  V(J_1)\supseteq V(J).
\label{VJinVJ1}
\end{equation}
To prove the opposite inclusion, we introduce the univariate
polynomial 
\begin{equation}
\newu = \mathrm{res}(\tilde{p}_1,\tilde{p}_2,\rho_1)
=
\tilde{a}_{1,1}\tilde{a}_{2,0} - \tilde{a}_{1,0}\tilde{a}_{2,1}
\label{vpoly}
\end{equation}
in the variable $\rho_2$.
It turns out that $\newu$ has degree 9.
We need the following results, that hold for a generic choice
of the data:
\begin{itemize}
\item[i)] 
$\oldu$ and $\newu$, defined in \eqref{upoly} and \eqref{vpoly} respectively, have 9 distinct solutions in $\C$ (i.e. they are
  {\em square-free}),
\vskip 0.1cm
\item[ii)] $\tilde{a}_{1,1}$ and $\tilde{a}_{2,1}$ are relatively prime, i.e.
\begin{equation}
  \mathrm{gcd}(\tilde{a}_{1,1},\tilde{a}_{2,1}) = 1.
\label{gcduno}
\end{equation}
\end{itemize}
The proof of these results is in \cite{gbm17}.
By (\ref{gcduno}) we can find two univariate polynomials $\beta,
\gamma$ in the variable $\rho_2$ such that
\begin{equation}
\beta\tilde{a}_{1,1} + \gamma\tilde{a}_{2,1} = 1.
\label{bezout}
\end{equation}
Let us introduce
\begin{equation}
\mathfrak{w} = \beta\tilde{p}_1 + \gamma\tilde{p}_2 = 
  \rho_1 + \mathfrak{z}(\rho_2),
\label{vudoppio}
\end{equation}
where
\[
\mathfrak{z} = \beta\tilde{a}_{1,0} + \gamma\tilde{a}_{2,0}.
\]
We show that
\begin{equation}
J_1 = 
\langle \mathfrak{w},
\newu\rangle.
\label{J1J2}
\end{equation}
In fact
\begin{equation}
\newu = \tilde{a}_{1,1}\tilde{p}_2 - \tilde{a}_{2,1}\tilde{p}_1,
\label{vp1p2}
\end{equation}
because
\[
\tilde{a}_{1,1}\tilde{p}_2 - \tilde{a}_{2,1}\tilde{p}_1 =
\tilde{a}_{1,1}(\tilde{a}_{2,1}\rho_1 + \tilde{a}_{2,0}) -
\tilde{a}_{2,1}(\tilde{a}_{1,1}\rho_1 + \tilde{a}_{1,0}) =
\tilde{a}_{1,1}\tilde{a}_{2,0}  - \tilde{a}_{2,1}\tilde{a}_{1,0}.  
\]
Relations \eqref{vudoppio}, \eqref{vp1p2} show that $\mathfrak{w},
\newu\in J_1$.  On the other hand, inverting these relations we also
obtain
\[
\begin{split}
\tilde{p}_1 &= \tilde{a}_{1,1}\mathfrak{w} - \gamma\newu,\cr
\tilde{p}_2 &= \tilde{a}_{2,1}\mathfrak{w} + \beta\newu,\cr
\end{split}
\]
that is $\tilde{p}_1$, $\tilde{p}_2$ belong to the ideal generated by
$\mathfrak{w}, \newu$.

Property (\ref{J1J2}) implies that $V(J_1)$ has 9 distinct
points.
In fact, for each root $\rho_2$ of $\newu$, which are all distinct because $\newu$ is square-free, we find from
$\mathfrak{w}=0$ a unique value of $\rho_1$ such that $(\rho_1,\rho_2)\in
V(J_1)$.

On the other hand, generically $V(J)$ has 9 distinct points
too.
This can be shown using Theorem~\ref{udeg9} and the fact that 
also $\mathfrak{u}$ is square-free (see \cite{gbm17}).
%
Then, from \eqref{VJinVJ1}
  we have\footnote{Hint: the fact that the variety of two ideals is the same does
  not mean that the two ideals are necessarily the same, see Hilbert's {\em
    nullstellensatz} in \cite{cox05}.}
\begin{equation}
V(J_1) = V(J).
\label{samevar}
\end{equation}
In particular, the polynomials $\newu$ and $\oldu$ coincide up to a
non-zero constant factor $c$:
\[
\newu = c\mathfrak{u},
\]
because their (complex) roots have
the same 9 values.

Now we prove that indeed the two ideals are the same:
\begin{equation}
J_1 = J.
\label{sameideals}
\end{equation}
We only need to show the inclusion $J \subseteq J_1$.
Assume the lexicographic order with
\[
\rho_1 \succ \rho_2
\]
for the monomials in $J$ and take any polynomial $h$ in $J$.  Dividing by $\mathfrak{w} = \rho_1 +
\mathfrak{z}(\rho_2)$ we obtain
\begin{equation}
h(\rho_1,\rho_2) = h_1(\rho_1,\rho_2)\mathfrak{w}(\rho_1,\rho_2)
+ \mathfrak{r}(\rho_2)
\label{decomp}
\end{equation}
for some polynomials $h_1,\mathfrak{r}$. 
The remainder $\mathfrak{r}$ depends only on $\rho_2$ because
of the particular form of $\mathfrak{w}$, whose leading term is $\rho_1$.
From ${\mathfrak{w}\in J_1 \subseteq J}$ and
(\ref{decomp}) we have that ${\mathfrak{r}\in J}$, so that
the roots of $\mathfrak{r}$ must contain all the $\rho_2$ coordinates
of the points in $V(J)$.

Using the fact that $\newu = c\mathfrak{u}$ is square-free we obtain
that $\newu$ must divide $\mathfrak{r}$, i.e. $\mathfrak{r} =
d\newu$ for some polynomal $d(\rho_2)$, which
together with \eqref{decomp} yields
\[
h = h_1\mathfrak{w} + d\newu \in J_1.
\]  
We conclude that \eqref{sameideals} holds.

The polynomials $\mathfrak{g}_1,\ldots,\mathfrak{g}_6$, with
\[{
\mathfrak{g}_1 = \mathfrak{q}_2,\quad
\mathfrak{g}_2 = \mathfrak{q}_3,\quad
\mathfrak{g}_3 = \hat{\mathfrak{p}}_4,\quad
\mathfrak{g}_4 = \hat{\mathfrak{p}}_5,\quad
\mathfrak{g}_5 = \mathfrak{w},\quad
\mathfrak{g}_6 = \mathfrak{u},
}\]
form a Gr\"obner basis of the ideal $I$ for the lexicographic order
(\ref{lexord}).
To show this, we can simply check that the leading monomials
of each pair ($\mathfrak{g}_i, \mathfrak{g}_j$), with $1\leq i<j\leq 6$, are
relatively prime.
This concludes the proof of the theorem.

\rightline{$\square$}

\begin{remark}
The proof above yields a {\em normalized} Gr\"obner basis for the
ideal $J$. In fact, we can rescale by constant factors the polynomials
of the basis and consider
\begin{eqnarray*}
  \mathfrak{g}_1 &=& \rhodot_1 + \mathfrak{h}_1(\rho_1,\rho_2),\\
  \mathfrak{g}_2 &=& \rhodot_2 + \mathfrak{h}_2(\rho_1,\rho_2),\\
  \mathfrak{g}_3 &=& z_1 + \mathfrak{h}_3(\rho_1,\rho_2),\\
  \mathfrak{g}_4 &=& z_2 + \mathfrak{h}_4(\rho_1,\rho_2),\\
  \mathfrak{g}_5 &=& \rho_1 + \mathfrak{z}(\rho_2),\\
    \mathfrak{g}_6 &=& \newu(\rho_2), 
\end{eqnarray*}
with
\[
\mathfrak{h}_1 = \frac{\JJ\cdot\DD_1\times(\DD_1\times\DD_2)}{|\DD_1\times\DD_2|^2},\quad 
\mathfrak{h}_2 = \frac{\JJ\cdot\DD_2\times(\DD_1\times\DD_2)}{|\DD_1\times\DD_2|^2},\quad 
\mathfrak{h}_3 = \frac{\mathfrak{a}_1}{A},\quad
\mathfrak{h}_4 =  \frac{\mathfrak{a}_2}{A}.
\]
\end{remark}
As a consequence of Theorem~\ref{groebteo}, we obtain
\begin{corollary}
  The polynomial $\oldu$ has the minimum degree among the univariate
polynomials in the variable $\rho_2$ belonging to the ideal $I$.
\end{corollary}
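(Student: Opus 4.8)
The plan is to derive the corollary directly from the Gr\"obner basis of Theorem~\ref{groebteo}, by applying the Elimination Theorem for lexicographic orders (see \cite{cox05}). The crucial structural feature is that the order \eqref{lexord} makes $\rho_2$ the \emph{smallest} variable, so that $\R[\rho_2]$ is precisely the coordinate subring that is eliminated last, and the set of univariate polynomials in $\rho_2$ lying in $I$ is exactly the elimination ideal $I\cap\R[\rho_2]$.

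First I would recall the Elimination Theorem: if $G$ is a Gr\"obner basis of $I$ for a lexicographic order in which $\rho_2$ is the last variable, then $G\cap\R[\rho_2]$ is a Gr\"obner basis of $I\cap\R[\rho_2]$. Applying this to the basis $\{\mathfrak{g}_1,\ldots,\mathfrak{g}_6\}$ produced in Theorem~\ref{groebteo}, the elimination ideal $I\cap\R[\rho_2]$ is generated by those $\mathfrak{g}_j$ that happen to lie in $\R[\rho_2]$.

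Next I would read off which basis elements qualify. From the normalized form displayed in the preceding remark, the leading terms of $\mathfrak{g}_1,\ldots,\mathfrak{g}_5$ are $\rhodot_1,\rhodot_2,z_1,z_2,\rho_1$, so each of these polynomials genuinely involves a variable larger than $\rho_2$ and cannot belong to $\R[\rho_2]$; only $\mathfrak{g}_6=\newu(\rho_2)=c\,\oldu$ is univariate in $\rho_2$. Hence $I\cap\R[\rho_2]=\langle\oldu\rangle$. Since $\R[\rho_2]$ is a principal ideal domain, every nonzero univariate polynomial in $\rho_2$ belonging to $I$ is a multiple of $\oldu$, and therefore has degree at least $\deg\oldu=9$, the bound being attained by $\oldu$ itself; this is exactly the asserted minimality.

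Concerning the main obstacle: almost all of the work is already contained in Theorem~\ref{groebteo}, so the corollary is essentially a bookkeeping consequence of having a lexicographic Gr\"obner basis with $\rho_2$ ranked last. The only point that deserves care is verifying that $\mathfrak{g}_6$ is the \emph{unique} basis element lying in $\R[\rho_2]$ — which follows by inspecting the leading monomials above — together with the harmless observation that $\newu$ and $\oldu$ differ by a nonzero constant, so that $\langle\newu\rangle=\langle\oldu\rangle$ and the principal ideal is the same whichever generator we use.
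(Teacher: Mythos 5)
Your argument is correct and is precisely the intended route: the paper states the corollary as an immediate consequence of Theorem~\ref{groebteo}, and the implicit reasoning is exactly your application of the Elimination Theorem to the lexicographic order \eqref{lexord} with $\rho_2$ last, giving $I\cap\R[\rho_2]=\langle\mathfrak{g}_6\rangle=\langle\oldu\rangle$ so that every nonzero univariate polynomial in $\rho_2$ lying in $I$ is a multiple of $\oldu$. Your inspection of the leading terms to confirm that $\mathfrak{g}_6$ is the only basis element in $\R[\rho_2]$, and the remark that $\newu=c\,\oldu$ generates the same principal ideal, are the right (and only) details to check.
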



\subsection{Selecting the solutions}
\label{s:sel2}

Given $\Avec = (\Att_1,\Att_2)$ with
covariance matrix
\[
{ \Gamma_\Avec =  \left[
\begin{array}{cc}
\Gamma_{\Att_1} &0\cr
0 &\Gamma_{\Att_2}\cr
\end{array}
\right],}
\]
let
\[
  { \Rvec = \Rvec(\Avec)=(\Rcal_1(\Avec),\Rcal_2(\Avec)),
\hskip 0.5cm                                                                      
\Rcal_i = (\rho_i, \rhodot_i),\ \ i=1,2
}
\]
be a solution of
  \begin{equation}
    {
    \bPhi(\Rvec; \Avec) = \left(\begin{array}{c}
\angmom_1-\angmom_2\cr
\bm{\xi}\cdot\erho_1\cr
\end{array}
\right) = \bzero,}
    \label{sistema}
    \end{equation}
where $\bxi$ is defined in \eqref{xidef}, and can also be written as
\begin{equation}
  \bm{\xi} = [\mu(\lenz_1-\lenz_2) - (\energy_1 -
    \energy_2)\erre_1]\times(\erre_1-\erre_2).
  \label{xivec}
\end{equation}
If both  $(\Att_1,\Rcal_1(\Avec))$,
$(\Att_2,\Rcal_2(\Avec))$ give bounded orbits at epochs
\begin{equation}
\tilde t_i = \tilde t_i(\Avec) = \bar t_i -
\frac{\rho_i(\Avec)}{c},\qquad i=1,2,
\label{taberr}
\end{equation}
where aberration of light with velocity $c$ is taken into account, then we can compute the corresponding Keplerian elements.
We introduce the vector
\[{
\bDelta_{a,\ell} =(\Delta a, \Delta\ell),
}\]
representing the difference in semimajor axis and mean anomaly
of the two orbits, comparing the anomalies at the same time $\tilde{t}_1$:
\[
  \Delta a = a_1-a_2, \hskip 0.5cm
    \Delta\ell = \ell_1-\ell_2 - n(a_2)(\tilde{t}_1-\tilde{t}_2),
    \]
    where
$n(a) = \sqrt{\mu} a^{-3/2}$ is the mean motion.
We consider the map
\[
  {
(\Att_1, \Att_2) = \Avec \mapsto \bPsi(\Avec) = \left(\Att_1,
  \Rcal_1,\bDelta_{a,\ell} \right), }
\]
giving the orbit $(\Att_1,\Rcal_1(\Avec))$ in attributable
coordinates at epoch $\tilde t_1$, together with the vector
$\bDelta_{a,\ell}(\Avec)$.

We map the covariance matrix $\Gamma_{\Avec}$ of $\Avec$ into the covariance matrix of
$\bPsi(\Avec)$ by 
\begin{equation*}{
\Gamma_{\bPsi(\Avec)} = \frac{\partial \bPsi}{\partial
\Avec }\; \Gamma_\Avec \; \left[\frac{\partial \bPsi}{\partial
\Avec }\right]^T.
}\end{equation*}

We can consider different ways to select the solutions. Two of them are the following.

\subsubsection{Compatibility conditions.}

We check whether the considered solution of \eqref{sistema} fulfills
the relation
\[
{ \bDelta_{a,\ell} = \bzero}
\]
within a thresold defined by $\Gamma_\Avec$.
More precisely, consider the marginal covariance matrix
\[
\Gamma_{\bDelta_{a,\ell}} = \frac{\partial \bDelta_{a,\ell}}{\partial
\Avec}\Gamma_{\Avec} \left[\frac{\partial \bDelta_{a,\ell}}{\partial
\Avec}\right]^T
\]
of the vector $\bDelta_{a,\ell}$.
The inverse matrix
\[
{ C^{\bDelta_{a,\ell}} = \Gamma^{-1}_{\bDelta_{a,\ell}}}
\]
defines a norm $\Vert\cdot\Vert_{\star}$
allowing to test the identification of $\Att_1, \Att_2$:
\[
  { \Vert \bDelta_{a,\ell}\Vert_{\star}^2 = \bDelta_{a,\ell}
    C^{\bDelta_{a,\ell}} \bDelta_{a,\ell}^T\leq \chi_{max}^2,}
\]
where $\chi_{max}$ is a control parameter, 
that needs to  be seleceted on the basis of            
simulations and practical tests with real data.                                

The orbits computed with the method of Section~\ref{s:kepintlink} are
such that
\begin{equation}
I_1=I_2, \qquad \Omega_1=\Omega_2, \qquad a_1(1-e_1^2) = a_2(1-e_2^2)
\label{eqelem}
\end{equation}
because they fulfill $\angmom_1=\angmom_2$.
Assuming $a_1=a_2$ we get $e_1=e_2$ from the third relation in \eqref{eqelem}.
Since $a_1=a_2$ corresponds to $\energy_1=\energy_2$, from $\bxi=\bzero$ we also obtain
\begin{equation}
\mu(\lenz_1-\lenz_2)\times(\erre_1-\erre_2)=\bzero.
\label{rellenz}
\end{equation}
The vectors $\lenz_1, \lenz_2$ have the same size because $e_1=e_2$. Since it is
quite unlikely that these vector differences are parallel,
generically relation \eqref{rellenz} implies
\[
\omega_1=\omega_2.
\]

\subsubsection{Attribution.}

We can try to attribute the data of ${\cal A}_2$ to each considered
solution $\x_1=({\cal A}_1, {\cal R}_1(\Att))$ of \eqref{sistema}, which has
the covariance matrix
\[{
  \Gamma_{\x_1} =
\left[
\begin{array}{cc}
\Gamma_{\Att_1}  &\Gamma_{\Att_1,\Rcal_1}\cr
\Gamma_{\Rcal_1, \Att_1} &\Gamma_{\Rcal_1}\cr
\end{array}
\right],
}\]
with
\[
\Gamma_{\Att_1} = \frac{\partial \Att_1}{\partial
\Avec}\Gamma_{\Avec} \left[\frac{\partial \Att_1}{\partial
\Avec}\right]^T,
\hskip 0.5cm
\Gamma_{\Rcal_1} = \frac{\partial \Rcal_1}{\partial
\Avec}\Gamma_{\Avec} \left[\frac{\partial \Rcal_1}{\partial
\Avec}\right]^T,
\]
\[
\Gamma_{\Att_1,\Rcal_1} = \Gamma_{\Att_1}\left[\frac{\partial
\Rcal_1}{\partial\Att_1}\right]^T,
\hskip 0.5cm
\Gamma_{\Rcal_1,\Att_1} = \Gamma_{\Att_1,\Rcal_1}^T .
\]
We recall here the attribution algorithm.
Assume that we have
\begin{itemize}
\item[i)] a least squares orbit $\x_1$ obtained from $m_1$ observations,
  with mean epoch $\bar t_1$, with covariance and normal matrices
  $\Gamma_{\x_1}, C_{\x_1}$;
\item[ii)] an attributable ${\cal A}_2$ obtained from
  $m_2$ observations, with mean epoch $\bar t_2$, with covariance and normal matrices
  $\Gamma_{{\cal A}_2}, C_{{\cal A}_2}$.
\end{itemize}

Assume that
\[
  \x
  \mapsto {\cal A} = G(\x)
\]
maps orbital elements to attributables and 
define the prediction function
\[
F(\x;t_0,t)=G\circ\Phi_{t_0}^t(\x),
\]
where $\Phi_{t_0}^t(\x)$ is the integral flow of the Kepler problem.
The covariance and normal matrices of ${\cal A}$
are given by
\[
\Gamma_{\cal A}=\left[\frac{\partial F}{\partial \x}\right] \Gamma_{\x} \left[\frac{\partial F}{\partial \x}\right]^T, \qquad C_{\cal A}= \Gamma_{\cal A}^{-1},
\]
where $\Gamma_{\x}$ is the covariance matrix of $\x$.


Let ${\cal A}_2$ be an attributable and $C_2$ its
$4\times 4$ normal matrix.
Let ${\cal A}_p$ be the predicted attributable at time $\bar t_2$, computed from the
least squares orbit $\x_1$, and $\Gamma_p, C_p$ its covariance and
normal matrices.
\medbreak
The formulae for linear attribution in the 4-D space are the following (see \cite{mg10}):
\begin{equation}
    \begin{split}
C_0&=C_2+C_p,\qquad \Gamma_0=C_0^{-1},\nonumber\\
\x_0&= \Gamma_0\left[C_2 {\cal A}_2 + C_p {\cal A}_p\right],
\nonumber\\
K_4&= ({\cal A}_p - {\cal A}_2)\cdot \left [C_2-C_2\,\Gamma_0\,C_2\right]\,({\cal A}_p - {\cal A}_2).
\nonumber
    \end{split}
\end{equation}
The values of the attribution penalty $K_4/m$, with $m=m_1+m_2$, is
used to filter out the pairs orbit-attributable which cannot belong to
the same object.

\subsection{Numerical test with {\tt Link2}}

We show an application of the {\tt Link2} algorithm using 4
observations of asteroid (4542) {\em Mossotti} made on April 28, 2011 and 4
observations of the same asteroid made on November 4, 2013. These data
have been collected by the telescope Pan-STARRS1, mount Hakeakala,
Hawaii, and are displayed in Table~\ref{tab:obs}. For simplicity, only
a few digits are reported here.  \small
\begin{table}[h!]
    \centering
      \begin{tabular}{p{0.15\textwidth}|p{0.15\textwidth}|p{0.15\textwidth}}
      \hline
      $\alpha$ (rad) &$\delta$ (rad) &$t$ (MJD)\\
      \hline
      4.127300       &$-$0.094246    &55679.51169  \\  
      4.127261       &$-$0.094238    &55679.52398  \\    
      4.127221       &$-$0.094230    &55679.53664   \\   
      4.127188       &$-$0.094223    &55679.54709  \\   
      \hline
      0.896220     &\ 0.078635    &56600.43378 \\
      0.896168     &\ 0.078626    &56600.44773 \\
      0.896119     &\ 0.078617    &56600.46130 \\
      0.896069     &\ 0.078608    &56600.47489 \\    
    \end{tabular}
      \vskip 0.1cm
    \caption{Values of right ascension ($\alpha$) and declination ($\delta$) used for the linkage.}
  \label{tab:obs}
\end{table}
\normalsize
From these observations we computed the attributables
  \[\begin{split}
  &{\cal A}_1 = (4.127242, -0.094234, -0.00316982,  \phantom{-}0.00064761),\\
  &{\cal A}_2 = (0.896144, \phantom{-}0.078622, -0.00364403, -0.00065882),
  \end{split}
  \]
at the mean epochs $\bar{t}_1=55679.52985$ MJD, $\bar{t}_2=56600.45442$ MJD. In the attributables ${\cal A}_1$, ${\cal A}_2$ the angles $\alpha$, $\delta$ are given in radians and the angular rates $\alphadot$, $\deltadot$ are given in radians/day.
  

After discarding solutions with non-real or non-positive values of
$\rho$, and unbounded solutions, we are left with the radial distance
pair
\[
  (\rho_1,\rho_2) = (1.8802, 2.1774) \ \mathrm{au},
\]
leading to the pair of preliminary orbits given in Table~\ref{tab:prelim}.
\begin{table}[h!]
  \begin{center}
\begin{tabular}{p{0.12\textwidth}|p{0.12\textwidth}|p{0.13\textwidth}|p{0.13\textwidth}|p{0.13\textwidth}|p{0.13\textwidth}|p{0.15\textwidth}}
      \hline
  $a$ (au)  &$e$  &$I$  &$\Omega$  &$\omega$  &$\ell$  &$\tilde{t}$ (MJD) \cr
  \hline
  3.03055 &0.06436   &11.22246  &104.80204  &117.44122   &\quad 5.63111  &55679.51899\cr
  3.02287 &0.04015   &11.22246  &104.80204  &114.03999  &188.86754  &56600.44185\cr
\end{tabular}
  \end{center}
  \caption{Pair of preliminary orbits computed with {\tt Link2}. The epoch $\tilde{t}$ has been corrected by aberration, see \eqref{taberr} in Section~\ref{s:sel2}. The angles $I, \Omega, \omega, \ell$ are given in degrees.}
  \label{tab:prelim}
\end{table}

The intersection of the curves defined by $p_1=p_2=q=0$ is shown in Figure~\ref{fig:p1p2q}.
\begin{figure}
\centerline{\epsfig{figure=./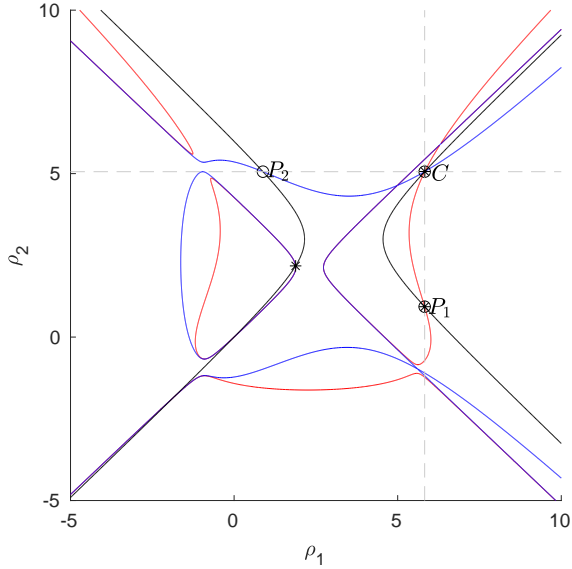,width=7.5cm}}
\caption{Intersections of the curves $p_1=0$ (red), $p_2=0$ (blue), $q=0$ (black) in the $\rho_1 \rho_2$ plane.}
\label{fig:p1p2q}
\end{figure}

Then we computed the rms of the preliminary orbits in
Table~\ref{tab:prelim} with respect to a pure Keplerian motion and
selected the first orbit as the best (the one with the least rms).  We
propagated this orbit at the mean epoch of the observations, which is
$\bar{t}=56139.99213$,
applied differential corrections and computed a least squares orbit.
This orbit is shown in Table~\ref{tab:LS}, together with the known orbit at the same epoch.\footnote{data from AstDyS-2 ({\tt https://newton.spacedys.com/astdys/}), orbit propagation with the OrbFit software ({\tt http://adams.dm.unipi.it/orbfit/})}
\begin{table}[h!]
  \begin{center}
    \begin{tabular}{p{0.08\textwidth}|p{0.12\textwidth}|p{0.12\textwidth}|p{0.13\textwidth}|p{0.13\textwidth}|p{0.13\textwidth}|p{0.13\textwidth}}
      \hline
  &$a$ (au)  &$e$  &$I$  &$\Omega$  &$\omega$  &$\ell$  \cr 
  \hline          
  LS   &3.01802    &0.05755  &11.32849  &104.37041   &146.76038  &66.54688\cr  
  known  &3.00997  &0.05614  &11.30734  &104.41991   &144.01204  &69.25283
\end{tabular}
  \end{center}
  \caption{Orbital elements of the least squares solution (LS) and of the known orbit. The angles $I, \Omega, \omega, \ell$ are given in degrees.}
  \label{tab:LS}
\end{table}

\section{Joining three TSAs}

Given three TSAs with attributables ${\cal A}_1, {\cal A}_2, {\cal
  A}_3$ at mean epochs $\bar{t}_1, \bar{t}_2, \bar{t}_3$, setting the
conservation of angular momentum is enough to obtain a finite number
of orbits.  We review the following result, presented in
\cite{gbm17}. Here the subscripts in $\angmom_i, \rho_i, \rhodot_i,
\DD_i, \EE_i, \FF_i, \GG_i$ refer to the three epochs.

\begin{proposition}
Assume
\begin{equation}
  \DD_1\times\DD_2\cdot\DD_3 \neq 0.
\label{D1D2D3}
\end{equation}
Then the polynomial system
\begin{subequations}
\begin{eqnarray}
(\angmom_1-\angmom_2)\cdot\DD_1\times\DD_2 &=& 0,\label{AM12uno}\\
(\angmom_1-\angmom_2)\cdot\DD_1\times(\DD_1\times\DD_2) &=& 0,\label{AM12due}\\
(\angmom_2-\angmom_3)\cdot\DD_2\times\DD_3 &=& 0,\label{AM23uno}\\
(\angmom_2-\angmom_3)\cdot\DD_2\times(\DD_2\times\DD_3) &=& 0,\label{AM23due}\\
(\angmom_3-\angmom_1)\cdot\DD_3\times\DD_1 &=& 0,\label{AM31uno}\\
(\angmom_3-\angmom_1)\cdot\DD_3\times(\DD_3\times\DD_1) &=& 0\label{AM31due}
\end{eqnarray}
\label{AMprojections}
\end{subequations}
in the 6 unknowns
\[
  \rho_1,\rhodot_1,\rho_2,\rhodot_2,\rho_3,\rhodot_3
  \]
  is equivalent to the redundant system
\begin{equation}
  \angmom_1 = \angmom_2,
  \qquad
  \angmom_2 = \angmom_3,
  \qquad
  \angmom_3 = \angmom_1.
  \label{angmomeqs}
\end{equation}
\end{proposition}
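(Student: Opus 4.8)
The plan is to prove the two implications separately; the forward one is trivial and the reverse one is where the hypothesis \eqref{D1D2D3} does the work. If \eqref{angmomeqs} holds, then each difference $\angmom_i-\angmom_j$ vanishes, so all six scalar products in \eqref{AMprojections} are automatically zero, whatever the projection directions. Hence the redundant system implies the projected one, and it remains to establish the converse.

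For the converse I would set $\bDelta_{12}=\angmom_1-\angmom_2$, $\bDelta_{23}=\angmom_2-\angmom_3$, $\bDelta_{31}=\angmom_3-\angmom_1$, which satisfy the identity $\bDelta_{12}+\bDelta_{23}+\bDelta_{31}=\bzero$. The key observation is that, for each consecutive pair, the two vectors onto which the corresponding difference is projected, namely $\DD_i\times\DD_j$ and $\DD_i\times(\DD_i\times\DD_j)$, are mutually orthogonal and both orthogonal to $\DD_i$; provided $\DD_i\times\DD_j\neq\bzero$, they therefore form a basis of the plane $\DD_i^\perp$. Consequently the simultaneous vanishing of the two projections of $\bDelta_{ij}$ is equivalent to $\bDelta_{ij}$ being parallel to $\DD_i$. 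Reading off the three pairs of equations in \eqref{AMprojections} in this way yields scalars $\lambda_1,\lambda_2,\lambda_3$ with $\bDelta_{12}=\lambda_1\DD_1$, $\bDelta_{23}=\lambda_2\DD_2$, $\bDelta_{31}=\lambda_3\DD_3$.

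To conclude I would add these three relations and invoke the identity above to get $\lambda_1\DD_1+\lambda_2\DD_2+\lambda_3\DD_3=\bzero$. Hypothesis \eqref{D1D2D3} says exactly that $\DD_1,\DD_2,\DD_3$ are linearly independent, so $\lambda_1=\lambda_2=\lambda_3=0$, whence $\bDelta_{12}=\bDelta_{23}=\bDelta_{31}=\bzero$ and the full equalities \eqref{angmomeqs} follow.

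The step requiring the most care --- and the only one that is more than bookkeeping --- is the spanning claim for each pair $\{\DD_i\times\DD_j,\,\DD_i\times(\DD_i\times\DD_j)\}$. I would justify it by the identity $\DD_i\times(\DD_i\times\DD_j)=(\DD_i\cdot\DD_j)\DD_i-|\DD_i|^2\DD_j$, which shows the second vector lies in the span of $\DD_i,\DD_j$ and, being orthogonal to $\DD_i\times\DD_j$, completes it to an orthogonal frame of $\DD_i^\perp$; the nonvanishing $\DD_i\times\DD_j\neq\bzero$ needed here is already implied by \eqref{D1D2D3}, since independence of the triple forces independence of every pair. Everything else reduces to elementary linear algebra, and no genericity assumption beyond \eqref{D1D2D3} is required.
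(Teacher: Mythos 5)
Your proof is correct, but it follows a genuinely different route from the paper's. Both arguments start from the same observation (made explicit by you, implicit in the paper) that $\DD_i\times\DD_j$ and $\DD_i\times(\DD_i\times\DD_j)$ are an orthogonal basis of the plane $\DD_i^\perp$ whenever $\DD_i\times\DD_j\neq\bzero$. From there the paper proceeds one vector equality at a time: to get $\angmom_3=\angmom_1$ it seeks a third test vector outside the span of $\DD_3\times\DD_1$ and $\DD_3\times(\DD_3\times\DD_1)$, shows that $\bv=\DD_1\times\DD_2$ works because $\DD_1\times\DD_2\cdot\DD_3\neq 0$, and then verifies $(\angmom_3-\angmom_1)\cdot\bv=0$ by combining \eqref{AM12uno} with the fact that \eqref{AM23uno}--\eqref{AM23due} force $(\angmom_2-\angmom_3)\cdot\bv=0$; the other two equalities are obtained ``in a similar way.'' You instead convert each pair of projections into the statement that the corresponding difference is parallel to $\DD_i$, and then dispatch all three equalities simultaneously via the telescoping identity $(\angmom_1-\angmom_2)+(\angmom_2-\angmom_3)+(\angmom_3-\angmom_1)=\bzero$ together with the linear independence of $\DD_1,\DD_2,\DD_3$, which is exactly what \eqref{D1D2D3} asserts. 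Your version is more symmetric, uses all six equations at once rather than five per equality, and replaces the paper's somewhat ad hoc triple-product computations with a single linear-independence argument; the paper's version has the mild advantage of exhibiting explicitly, for each difference $\angmom_i-\angmom_j$, three concrete independent directions along which it is annihilated. Both proofs use the hypothesis \eqref{D1D2D3} in an essential and equivalent way, and your justification of the spanning claim via $\DD_i\times(\DD_i\times\DD_j)=(\DD_i\cdot\DD_j)\DD_i-|\DD_i|^2\DD_j$ and the remark that \eqref{D1D2D3} already guarantees $\DD_i\times\DD_j\neq\bzero$ for every pair are both sound.
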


\begin{proof}
  System \eqref{angmomeqs} trivially implies
  \eqref{AMprojections}. Assume now that system \eqref{AMprojections}
  holds. Using relations \eqref{AM31uno}, \eqref{AM31due}, to prove
  that $\angmom_3=\angmom_1$ we only need to show that
\begin{equation}
(\angmom_3-\angmom_1)\cdot\bv = 0
  \label{c3mc1_v}
  \end{equation}
for some vector $\bv$ that does not belong to the linear space generated by
$\DD_3\times\DD_1$ and $\DD_3\times(\DD_3\times\DD_1)$.
Indeed we show that we can choose
\[
\bv = \DD_1\times\DD_2.
\]
Note that
\[
(\DD_1\times\DD_2) \cdot 
(\DD_2\times\DD_3)\times\bigl(\DD_2\times(\DD_2\times\DD_3)\bigr) =  0,
\]
that is, the vector $\DD_1\times\DD_2$ belongs to the
linear space generated by $\DD_2\times\DD_3$ and
$\DD_2\times(\DD_2\times\DD_3)$.
Moreover, $\DD_1\times\DD_2$ is not generated by $\DD_3\times\DD_1$ and
$\DD_3\times(\DD_3\times\DD_1)$, in fact by (\ref{D1D2D3}) we have
\[
(\DD_1\times\DD_2) \cdot 
(\DD_3\times\DD_1)\times\bigl(\DD_3\times(\DD_3\times\DD_1)\bigr) = 
|\DD_3\times\DD_1|^2 \DD_1\times\DD_2\cdot\DD_3 \neq 0.
\]
Setting
\[
\bv = \DD_1\times\DD_2,
\]
from \eqref{AM12uno}, \eqref{AM23uno}, \eqref{AM23due} we obtain
$(\angmom_1-\angmom_2)\cdot\bv = (\angmom_2-\angmom_3)\cdot\bv = 0$,
which yield \eqref{c3mc1_v} and therefore we obtain
$\angmom_3=\angmom_1$.  In a similar way we can prove that
$\angmom_1=\angmom_2$, $\angmom_2=\angmom_3$, provided that system
\eqref{AMprojections} holds.

\rightline{$\square$}
\end{proof}

Equations \eqref{angmomeqs}
can be written as
\begin{eqnarray*}
&&\DD_1\rhodot_1 - \DD_2\rhodot_2 = \JJ_{12}(\rho_1,\rho_2),\hskip 0.5cm\\
&&\DD_2\rhodot_2 - \DD_3\rhodot_3 = \JJ_{23}(\rho_2,\rho_3),\hskip 0.5cm\\
&&\DD_3\rhodot_3 - \DD_1\rhodot_1 = \JJ_{31}(\rho_3,\rho_1),
%
\end{eqnarray*}
where
\begin{eqnarray*}
\JJ_{12}(\rho_1,\rho_2) &=& \EE_2\rho_2^2 - \EE_1\rho_1^2 + \FF_2\rho_2 -
\FF_1\rho_1 + \GG_2 - \GG_1,\\
\JJ_{23}(\rho_2,\rho_3) &=& \EE_3\rho_3^2 - \EE_2\rho_2^2 + \FF_3\rho_3 -
\FF_2\rho_2 + \GG_3 - \GG_2,\\
\JJ_{31}(\rho_3,\rho_1) &=& \EE_1\rho_1^2 - \EE_3\rho_3^2 + \FF_1\rho_1 -
\FF_3\rho_3 + \GG_1 - \GG_3.
\end{eqnarray*}

Equations (\ref{AM12uno}), (\ref{AM23uno}), (\ref{AM31uno})
depend only on the radial distances. In fact, they correspond to the system
\begin{equation}
{\JJ_{12}\cdot \DD_1\times\DD_2 = 0,
\quad
\JJ_{23}\cdot \DD_2\times\DD_3 = 0,
\quad
\JJ_{31}\cdot \DD_3\times\DD_1 = 0,}
\label{JJ123}
\end{equation}
which can be written as
\begin{eqnarray}
q_3 &=& a_3\rho_2^2 + b_3\rho_1^2 + c_3\rho_2 + d_3\rho_1 + e_3 = 0,
\label{qtre}\\
q_1 &=& a_1\rho_3^2 + b_1\rho_2^2 + c_1\rho_3 + d_1\rho_2 + e_1 = 0,
\label{quno}\\
q_2 &=& a_2\rho_1^2 + b_2\rho_3^2 + c_2\rho_1 + d_2\rho_3 + e_2 = 0,\label{qdue}
\end{eqnarray}
where
\begin{eqnarray*}
&&a_3 =  \EE_2\cdot \DD_1\times\DD_2,\qquad
b_3 = -\EE_1\cdot \DD_1\times\DD_2,\\
&&c_3 =  \FF_2\cdot \DD_1\times\DD_2,\qquad
d_3 = -\FF_1\cdot \DD_1\times\DD_2,\\
&&\hskip 1cm e_3 =  (\GG_2-\GG_1)\cdot \DD_1\times\DD_2,
\end{eqnarray*}
and the other coefficients $a_j, b_j, c_j, d_j, e_j$, for $j=1,2$, have
similar expressions, obtained by cycling the indexes.

To eliminate $\rho_1, \rho_3$ from (\ref{JJ123}) we can first compute the
resultant
\[
r = \mathrm{res}(q_3,q_2,\rho_1),
\]
which depends only on $\rho_2,\rho_3$, and then the resultant
\[
\mathfrak{q} = \mathrm{res}(r,q_1,\rho_3),
\]
which is a univariate polynomial of degree 8 in the variable $\rho_2$.

Therefore, provided that (\ref{D1D2D3}) holds, to get the
solutions of (\ref{angmomeqs}) we search for the roots $\bar{\rho}_2$
of $\mathfrak{q}(\rho_2)$, compute the corresponding values
$\bar{\rho}_3$ of $\rho_3$ from $r(\rho_3,\bar{\rho}_2) =
q_1(\rho_3,\bar{\rho}_2) = 0$, and the values
of $\rho_1$ from $q_3(\rho_1,\bar{\rho}_2) =
q_2(\bar{\rho}_3,\rho_1) = 0.$
  
From equations \eqref{AM12due}, \eqref{AM23due}, \eqref{AM31due}
we can write the radial velocities $\rhodot_j$ as functions of pairs of
radial distances:
{  \begin{eqnarray*}
\rhodot_2 &=&
\frac{\JJ_{12}(\rho_1,\rho_2)\cdot\DD_1\times(\DD_1\times\DD_2)}{|\DD_1\times\DD_2|^2},\\
\rhodot_3 &=&
\frac{\JJ_{23}(\rho_2,\rho_3)\cdot\DD_2\times(\DD_2\times\DD_3)}{|\DD_2\times\DD_3|^2},\\
\rhodot_1 &=&
\frac{\JJ_{31}(\rho_3,\rho_1)\cdot\DD_3\times(\DD_3\times\DD_1)}{|\DD_3\times\DD_1|^2}.
\end{eqnarray*}}
From these data we can reconstruct the orbital elements.


\subsection{Straight line solutions}
  
A particular solution of system (\ref{angmomeqs}) can be obtained by
searching for values of $\rho_j,\rhodot_j$ such that
\[
{ \angmom_j(\rho_j,\rhodot_j) = \bzero, \hskip 1cm j=1,2,3.}
\]
Let us drop the index $j$.
Relation $\erre\times\erredot = \bzero$ implies that there exists
$\lambda\in\R$ such that
\begin{equation}
{ \rhodot\erho + \rho\etabf + \bqdot = \lambda(\rho\erho + \bq),}
\label{parall}
\end{equation}
with $\etabf = \alphadot\cos\delta\ealpha + \deltadot\edelta$.
Setting $\sigma = \rhodot-\lambda\rho$ we can write (\ref{parall}) as
\begin{equation}
{ \sigma\erho + \rho\etabf - \lambda\bq = -\bqdot.}
\label{parall2}
\end{equation}
We introduce the vector
\[
{ \bu = \bq - (\bq\cdot\erho)\erho - \frac{1}{\eta^2}(\bq\cdot\etabf)\etabf,}
\]
which is orthogonal to both $\erho$ and $\etabf$, where $\eta = |\etabf|$.
%

Thus, we can write (\ref{parall2}) as
\[{
[\sigma - \lambda(\bq\cdot\erho)]\erho +
\Bigl[\rho-\frac{\lambda}{\eta^2}(\bq\cdot\etabf)\Bigr]\etabf - \lambda\bu =
-\bqdot.
}\]
Since $\{\erho,\etabf,\bu\}$ is generically an orthogonal basis of $\R^3$, we
find
\[{
\lambda = \frac{1}{|\bu|^2}(\bqdot\cdot\bu),\qquad
\rho = \frac{1}{\eta^2}(\lambda\bq - \bqdot)\cdot\etabf,\qquad
\rhodot = \lambda\rho + (\lambda\bq - \bqdot)\cdot\erho.
}\]

\noindent In particular, we obtain the value 
\[
{   \rho = \frac{1}{\eta^2}
  \Bigl(\frac{1}{|\bu|^2}(\bqdot\cdot\bu)(\bq\cdot\etabf) - 
    \bqdot\cdot\etabf\Bigr)}
\]
for the radial distance, corresponding to a solution with zero angular
momentum.

\subsection{Selecting the solutions}

Given $\Avec = (\Att_1,\Att_2,\Att_3)$ with
covariance matrices $\Gamma_{\Att_1}, \Gamma_{\Att_2}, \Gamma_{\Att_3}$, let
\[
\Rvec = \Rvec(\Avec) = \bigl({\cal R}_1(\Avec), {\cal R}_2(\Avec), {\cal R}_3(\Avec)\bigr), \qquad {\cal R}_i=(\rho_i,\rhodot_i), \quad i=1,2,3
\]
be a solution of
\begin{equation}
\bPhi(\Rvec;\Avec) = {\bf 0},
\label{Phi_eq_0}
\end{equation}
with
\[
\bPhi(\Rvec; \Avec) = \left(
\begin{array}{c}
(\angmom_1 - \angmom_2)\cdot\DD_1\times(\DD_1\times\DD_2)\cr
(\angmom_1 - \angmom_2)\cdot\DD_1\times\DD_2\cr
(\angmom_2 - \angmom_3)\cdot\DD_2\times(\DD_2\times\DD_3)\cr
(\angmom_2 - \angmom_3)\cdot\DD_2\times\DD_3\cr
(\angmom_3 - \angmom_1)\cdot\DD_3\times(\DD_3\times\DD_1)\cr
(\angmom_3 - \angmom_1)\cdot\DD_3\times\DD_1\cr
\end{array}
\right) .
\]
%
If $({\cal A}_1, {\cal R}_1(\Att))$, $({\cal A}_2, {\cal R}_2(\Att))$, and
$({\cal A}_3, {\cal R}_3(\Att))$ give bounded orbits at epochs
\[
\tilde{t}_i = \bar{t}_i-\frac{\rho_i(\Att)}{c}, \qquad i=1,2,3,
\]
then we compute the corresponding Keplerian elements.
We introduce the difference vectors
\begin{eqnarray*}
\bDelta_{12} &=& \bigl(a_1-a_2, \omega_1-\omega_2 , 
\ell_1 - \ell_2 - n(a_2)(\tilde{t}_1-\tilde{t}_2)\bigr),\\
\bDelta_{32} &=&\bigl(a_3-a_2, \omega_3-\omega_2, 
\ell_3- \ell_2 - n(a_2)(\tilde{t}_3-\tilde{t}_2)\bigr),
\end{eqnarray*}
where
$n(a) = \sqrt{\mu} a^{-3/2}$ is the mean motion.
%
%
We consider map
\[
(\Att_1, \Att_2 ,\Att_3) = \Avec \mapsto \bPsi(\Avec) = \left(\Att_2,
  \Rcal_2,\bDelta_{12}, \bDelta_{32}\right),
\]
giving the orbit $(\Att_2,\Rcal_2)$ in attributable coordinates at epoch
$\tilde{t}_2$ together with the vectors $\bDelta_{12}$,
$\bDelta_{32}$, which are not constrained by the angular momentum
integrals.

We map the covariance matrix
\[
\Gamma_{\Avec} = \left[\begin{array}{ccc}
    \Gamma_{{\cal A}_1} &0 &0\cr
    0 &\Gamma_{{\cal A}_1} &0 \cr
    0 &0 &\Gamma_{{\cal A}_1}\cr
  \end{array}
  \right]
\]
of $\Avec$ into the covariance matrix of
$\bPsi(\Avec)$ by the covariance propagation rule:
\begin{equation*}
\Gamma_{\bPsi(\Avec)} = \frac{\partial \bPsi}{\partial
\Avec }\; \Gamma_\Avec \; \left[\frac{\partial \bPsi}{\partial
\Avec }\right]^T,
\end{equation*}

We can check whether the considered solution of (\ref{Phi_eq_0})
fulfills the {\bf compatibility conditions}
\[
\bDelta_{12} = \bDelta_{32} = \bzero
\]
within a threshold defined by $\Gamma_\Att$.
%
More precisely, consider the marginal covariance matrix
$\Gamma_{\bDelta}$
of the vector
\[
\bDelta = (\bDelta_{12}, \bDelta_{32}).
\]
The inverse matrix $C^{\bDelta} = \Gamma^{-1}_{\bDelta}$ defines a norm
$\Vert\cdot\Vert_{\star}$ 
allowing us to test an
identification between the attributables $\Att_1, \Att_2, \Att_3$: we check
whether 
\begin{equation}
\Vert \bDelta\Vert_{\star}^2 = \bDelta C^{\bDelta}
\bDelta^T\leq \chi_{max}^2,
\label{control_3}
\end{equation}
where $\chi_{max}$ is a control parameter.


\subsection{Numerical test with {\tt Link3}}

We show an application of the {\tt Link3} algorithm using three TSAs of observations of asteroid (4628) {\em Laplace}, listed in Table~\ref{tab:obs3}.
\small
\begin{table}[h!]
    \centering
      \begin{tabular}{p{0.15\textwidth}|p{0.15\textwidth}|p{0.15\textwidth}}
      \hline
      $\alpha$ (rad) &$\delta$ (rad) &$t$ (MJD)\\
      \hline
      5.497381      &$-$0.067942    &55794.33902\\
      5.497339      &$-$0.067950    &55794.35011\\
      5.497195      &$-$0.067978    &55794.38807\\
      5.497148      &$-$0.067987    &55794.40021\\
      \hline
      0.715965      & 0.542095    &56226.52009\\
      0.715918      & 0.542080    &56226.53117\\
      0.715867      & 0.542063    &56226.54334\\
      0.715816      & 0.542047    &56226.55525\\
      \hline
      0.831317      & 0.390743    &56358.23971\\
      0.831350      & 0.390746    &56358.24497\\
      0.831383      & 0.390749    &56358.25023\\
      0.831416      & 0.390751    &56358.25550\\
      \end{tabular}
      \vskip 0.1cm
    \caption{Values of right ascension ($\alpha$) and declination ($\delta$) of asteroid (4628) {\em Laplace} collected by the Pan-STARRS1 telescope.}
  \label{tab:obs3}
\end{table}
\normalsize
From these observations we computed the three attributables
\[
\begin{split}
  &{\cal A}_1 = (5.497266, -0.067965, -0.00379969, -0.00072536)\\
  &{\cal A}_2 = (0.715891, \phantom{-}0.542071, -0.00422693, -0.00136864)\\
  &{\cal A}_3 = (0.831367, \phantom{-}0.390747, \phantom{-}0.00622482,  \phantom{-}0.00054073)
\end{split}
\]
at the mean epochs $\bar{t}_1=55794.36935$, $\bar{t}_2=56226.53746$, $\bar{t}_3=56358.24760$, given in MJD. In the attributables ${\cal A}_j$ the angles are given in radians and the angular rates in radians/day.
After discarding the straight-line solution, the solutions with
negative values of $\rho$, and the unbounded ones, we are left with the radial distance
triplets
\[
\begin{split}
&(\rho_1,\rho_2,\rho_3) = (2.1955, 1.9028, 2.9200)\ \mathrm{au},\\
&(\rho_1,\rho_2,\rho_3) = (1.9379, 1.8279, 2.8870)\ \mathrm{au},
\end{split}
\]
leading to the triplets of preliminary orbits displayed in
Table~\ref{tab:prelim3}. \small
\begin{table}[h!]
  \begin{center}
\begin{tabular}{p{0.02\textwidth}|p{0.12\textwidth}|p{0.12\textwidth}|p{0.12\textwidth}|p{0.14\textwidth}|p{0.14\textwidth}|p{0.14\textwidth}|p{0.15\textwidth}}
      \hline
  &$a$ (au)  &$e$  &$I$  &$\Omega$  &$\omega$  &$\ell$  &$\tilde{t}$ (MJD) \cr
  \hline
 &2.86808    &0.30942   &12.13274  &274.68641  &172.31982  &266.26844  &55794.35667\cr
1 &2.64520   &0.13981   &12.13274  &274.68641  &258.53770  &242.07553  &56226.52647\cr
  &2.59619   &0.03219   &12.13274  &274.68641  &290.50786  &228.16130  &56358.23074\cr
\hline
 &2.64614    &0.11646   &11.78916  &275.69255  &249.45265  &149.80066  &55794.35816\cr
2 &2.64562    &0.11562  &11.78916  &275.69255  &248.51598  &249.78277  &56226.52691\cr
  &2.64427    &0.11343   &11.78916  &275.69255  &247.58320  &280.66987   &56358.23093\cr
\end{tabular}
  \end{center}
  \caption{Triplets of preliminary orbits computed with {\tt Link3}. The angles $I, \Omega, \omega, \ell$ are given in degrees.}
  \label{tab:prelim3}
\end{table}
\normalsize

Based on the norm $\Vert\bDelta\Vert_\star$, we selected the second triplet.
Checking the rms of these orbits with respect to a pure Keplerian
motion we selected the first orbit of this triplet.
We propagated this orbit at the mean epoch of the 12 observations in
Table~\ref{tab:obs3}, which is $\bar{t} =56126.38480$, applied
differential corrections and computed a least squares orbit. This
orbit is shown in Table~\ref{tab:LS3}, toghether with the known orbit
at the same epoch.

\begin{table}[h!]
  \begin{center}
    \begin{tabular}{p{0.08\textwidth}|p{0.12\textwidth}|p{0.12\textwidth}|p{0.14\textwidth}|p{0.14\textwidth}|p{0.14\textwidth}|p{0.14\textwidth}}
      \hline
  &$a$ (au)  &$e$  &$I$  &$\Omega$  &$\omega$  &$\ell$  \cr 
  \hline          
  LS     &2.64443  &0.11729  &11.79295  &275.66956  &248.45817  &227.09536\cr
  known  &2.64441  &0.11730  &11.79294  &275.66961  &248.46069  &227.09295\cr
\end{tabular}
  \end{center}
  \caption{Orbital elements of the least squares solution (LS) and of the known orbit. The angles $I, \Omega, \omega, \ell$ are given in degrees.}
  \label{tab:LS3}
\end{table}

\section{Conclusions and future work}

We reviewed two initial orbit determination methods for TSAs of
optical observations employing the conservation laws of Kepler's
problem. Some algebraic properties of these algorithms have also been
discussed and a simple test case has been presented for both.
Being based on conservation laws, these methods are suitable to link
TSAs quite far apart in time, even differing by more than one orbital
period of the observed body.  Moreover, these algorithms are very
fast, because they are based on a polynomial formulation with low
degree (9 for {\tt Link2}, 8 for {\tt Link3}). The sensitivity of
these algorithms to astrometric errors is an important feature to be
investigated: in fact it seems that some orbital elements are more
sensitive to these errors.  Moreover, it would be important to find
efficient filters to discard {\em a priori} pairs of TSAs that are not
likely to belong to the same observed object.  Indeed, even if some
filters have been proposed in \cite{gdm10}, \cite{gbm15}, a
satisfactory solution to this problem is still missing. The mentioned
problems are currently under investigation.

\section{Acknowledgements}

The author wishes to thank dr. Giulio Ba\`u for carefully reading
the manuscript, and for his useful suggestions. The author
has been partially supported by the MSCA-ITN Stardust-R, Grant
Agreement n. 813644 under the H2020 research and innovation program.
He also acknowledges the project MIUR-PRIN 20178CJA2B ``New frontiers
of Celestial Mechanics: theory and applications'' and the GNFM-INdAM
(Gruppo Nazionale per la Fisica Matematica).

\bibliography{mybib}{}
\bibliographystyle{plain}

\end{document}